\newtheorem{defn}{Definition}
\newtheorem{proposition}[defn]{Proposition} 
\newtheorem{theorem}[defn]{Theorem}
\newtheorem*{prf1}{Proof}
\def\etr{\text{etr}}
\def\hZ{\hat{Z}}
\def\bkappa{\boldsymbol\kappa}
\def\cY{\mathcal{Y}}
\def\tD{\tilde{D}}
\def\bX{\mathbb{X}}
\def\bU{\mathbb{U}}
\def\htheta{\hat{\theta}}
\newcommand{\dif}{\mathrm{d}}
\def\ml{{p}_{\text{ML}}}
\def\seq{{p}_{\text{seq}}}
\begin{document}

\title{Data augmentation for models based on rejection sampling}

\author{Vinayak Rao\thanks{varao@purdue.edu} \thanks{Department of Statistics, Purdue University, USA }, 
Lizhen Lin
\thanks{Department of Statistics and Data Science, University of Texas at Austin, USA }, 
David Dunson
\thanks{Department of Statistical Science, Duke University, USA}
}

\maketitle

\begin{abstract}
We present a data augmentation scheme to perform Markov chain Monte Carlo inference for models where data generation involves a
rejection sampling algorithm. 
Our idea, which seems to be missing in the literature, is %to instantiate the rejected proposals preceding each data point, and we show that this can be done 
a simple scheme to instantiate the rejected proposals preceding each data point. %easily and efficiently.
The resulting joint probability over observed and rejected variables can be much simpler than the marginal distribution over the
observed variables, which often involves intractable integrals. 
%Our algorithm is an instance of a growing body of work on exact Markov chain Monte Carlo inference for 
%doubly-intractable distributions and 
We consider three problems, the first being the modeling of flow-cytometry measurements subject to truncation. 
The second is a Bayesian analysis of the matrix Langevin distribution on the Stiefel manifold, and the third,
Bayesian inference for a nonparametric Gaussian process density model.
The latter two are instances of problems where  Markov chain Monte Carlo inference 
is doubly-intractable. % distributions and we consider two such problems. 
Our experiments demonstrate superior performance over state-of-the-art sampling algorithms for such problems.

%       The Stiefel manifold $V_{p,d}$ is the space of all $d \times p$ orthonormal matrices, and includes
%       %-frames in $\mathbb{R}^d$, with each point on the manifold corresponding to a set of $p$ ordered orthonormal vectors.  Special cases correspond to the
%       the $d-1$ hypersphere and the space of all orthogonal matrices as special cases.
%       %The manifold plays a fundamental role in statistics, where
%       It is often desirable to parametrically
%       or nonparametrically estimate the distribution of variables taking values on this manifold.  Unfortunately, the intractable normalizing constant in the
%       likelihood makes Bayesian inference challenging.  We develop a novel Markov chain Monte Carlo algorithm to sample from this doubly intractable
%       distribution. By mixing the matrix Langevin with respect to a random probability measure, we
%       define a flexible class of nonparametric models. Theory is provided justifying the flexibility of the model and its asymptotic properties, while we also extend
%       the MCMC algorithm to this situation.  We apply our ideas to simulated data and two real datasets, and also discuss how our sampling
%       ideas are applicable to other doubly intractable problems. \\
%       % in the literature beyond the one discussed here.\\

\textbf{Keywords}: Bayesian inference; Density estimation;  Doubly intractable; Gaussian process; Matrix Langevin; Markov Chain Monte Carlo; Rejection sampling; Stiefel manifold; Truncation.

\end{abstract}

\section{Introduction}

% Bayesian inference,
%even using computational techniques like MCMC is intractable. One contribution
%of this work is to provide a alternate distribution (which we call the
%sequential von Mises Fisher distribution), whose structure allows Bayesian
%inference on its parameters. A consequence is that we are now able to use this

Rejection sampling %is a widely applicable technique 
allows sampling from a probability density $p(x)$ by %. The idea is to 
constructing an upper bound to $p(x)$,
and accepting or rejecting samples from a density proportional to the bounding envelope. 
%Each sample $x^*$ is either accepted or rejected, with the acceptance
%probability equal to the original density $p(x^*)$ divided by the envelope evaluated at $x^*$. This division implies that 
%even if $p(x)$ involves an intractable normalization constant, by choosing an appropriate bounding envelope, 
%$p(x)$ need only be evaluated upto a constant of proportionality.
The envelope is usually much simpler than $p(x)$, with the number of rejections %, and thus the algorithm's efficiency, 
determined by how closely it matches the true density. 

In typical applications, the probability density of interest is indexed by a parameter $\theta$, and we write it as $p(x\mid\theta)$. 
A Bayesian analysis places a prior on $\theta$, and given observations from the likelihood $p(x\mid\theta)$, studies the posterior over $\theta$. 
An intractable likelihood, often with a normalization
constant depending on $\theta$, precludes straightforward Markov chain Monte Carlo inference over $\theta$: calculating a Metropolis-Hastings acceptance ratio
involves evaluating the ratio of two such likelihoods, and is itself intractable. This class of problems is called `doubly intractable' \citep{murray2006},
and existing approaches require the ability to draw exact samples
from $p(x\mid\theta)$, or to obtain positive unbiased estimates of $p(x\mid\theta)$.

We describe a different approach that is applicable when $p(x\mid\theta)$ has an associated rejection sampling algorithm.
Our idea is to instantiate the rejected proposals preceding each observation, resulting in an augmented state-space on which we run a Markov chain.
Including the rejected proposals %as auxiliary variables 
can eliminate any intractable terms, and allow the application of standard techniques \citep{adams_gpds}.
Importantly, we show that conditioned on the observations, it is straightforward to independently sample the number and values of the rejected proposals:
this just requires running the rejection sampler to generate as many acceptances as there are observations, with all rejected proposals kept.
The ability to produce a conditionally independent draw of these variables is important when posterior updates of some parameters are intractable,
while others are simple. In such a situation, we introduce the rejected variables only when we need 
to carry out the intractable updates, after which we discard them and carry out the simpler updates.

A particular application of our algorithm is parameter inference for probability distributions truncated
to sets like the positive orthant, the simplex, or the unit sphere.
Such distributions correspond to sampling proposals from the untruncated distribution and rejecting those outside the domain of interest. 
We consider an application from flow cytometry where this representation is the actual data collection process.
Truncated distributions also arise in diverse applications like measured time-to-infection \citep{Goeth09}, where times larger than a year are truncated,
mortality data \citep{Alai2013}, annuity valuation for truncated lifetimes  \citep{Alai2013}, and
stock price changes \citep{aban06}. One approach for such problems was proposed in \cite{leich09}, though their algorithm samples from an
approximation to the posterior distribution of interest. Our algorithm provides a simple and general way to apply the machinery of Bayesian inference 
to such problems.

\section{Rejection sampling}
Consider a probability density $p(x\mid\theta) = {f(x,\theta)}/{Z(\theta)}$ on some space $\mathbb{X}$, with the parameter $\theta$ 
taking values in $\Theta$.
We assume that the normalization constant $Z(\theta)$ is difficult to evaluate, so that na\"{\i}vely sampling from $p(x\mid\theta)$ is
not easy. We also assume there exists a second, simpler density $q(x\mid\theta) \ge  f(x, \theta)/M$ for all $x$, and for some positive $M$.

 % \begin{figure}
 % \centering
 %   \includegraphics[width=.53\textwidth]{figs/rejecn.pdf}
 % \caption{Rejection sampling}
 % \label{fig:rejecn}
 % \end{figure}

Rejection sampling generates samples distributed as $p(\cdot\mid\theta)$ by first proposing samples from $q(\cdot\mid\theta)$. A draw $y$ from 
$q(\cdot\mid\theta)$ is accepted with probability ${ f(y, \theta)}/\left\{M q(y\mid\theta)\right\}$. Let there be $r$ rejected proposals preceeding an accepted sample 
$x$, and denote
them by $\mathcal{Y} = \{y_1, \cdots, y_r \}$ where $r$ itself is a random variable. Write $|\cY| = r$, so that the joint probability is
\begin{align}
  p(\mathcal{Y}, x) & = \left[ \prod_{i=1}^{|\cY|} q(y_i\mid\theta) \left\{1 - \frac{f(y_i, \theta)}{M q(y_i\mid\theta)}\right\} \right]
                        q(x\mid\theta) \left\{ \frac{ f(x, \theta)}{M q(x\mid\theta)} \right\} \nonumber \\
                    & =  \frac{f(x, \theta)}{M} \prod_{i=1}^{|\mathcal{Y}|}  \left\{(q(y_i\mid\theta) - \frac{ f(y_i, \theta)}{M} \right\}. \label{eq:rej_jnt}
\end{align}

%While it is possible to try and integrate 
It is well known that this procedure recovers samples from $p(x\mid\theta)$, so that the expression above has the correct marginal distribution over $x$ \citep{Robert05}.
%To see why the expression above has the correct marginal distribution over $x$, it is perhaps simplest to look at 
%Figure \ref{fig:rejecn}. Observe that samples drawn from $q$ are distributed uniformly below the dashed curve, and the process of rejection leaves us with 
%samples uniformly distributed below the continuous curve. The accepted points are thus distributed as $p(x\mid\theta)$.
%Since the set $\cY$ and the observation $X$ are drawn i.i.d.\ from the proposal distribution,
Later, we will need to sample the rejected variables $\cY$ given an observation $x$ drawn from $p(\cdot\mid\theta)$. 
Simulating from $p(\cY\mid x,\theta)$ involves the two steps in Algorithm \ref{alg:rej_sim}.
Algorithm~\ref{alg:rej_sim} relies on Proposition~\ref{prop:rej_post} about $p(\mathcal{Y}\mid x,\theta)$ which we prove in
the appendix. 
{
\begin{algorithm}
\caption{Algorithm to sample from $p(\cY\mid x,\theta)$ } \label{alg:rej_sim}
\begin{tabular}{p{1.4cm}p{12.2cm}}
\textbf{Input:}  & A sample $x$, and the parameter value ${\theta}$ \\
\textbf{Output:} & The set of rejected proposals $\cY$ preceeding $x$\\
\hline
\end{tabular}
\begin{algorithmic}[1]
  \STATE Draw sample $y_i$ independently from $q(\cdot\mid\theta)$ until a point $\hat{x}$ is accepted.
  \STATE Discard $\hat{x}$, and treat the preceeding rejected proposals  as $\cY$.
\end{algorithmic}
\end{algorithm}
}
% being independent of $x$; we prove this below: %in the following proposition.
% In other words, the conditional distribution $P(\cY\mid x, \theta)$ is independent of $x$, so that the set of rejected proposals are exchangeable across 
%observations.
%There is no complication with different values of $x$ having different distributions $|\cY|$.
%, and this is correct
%While it is possible to show this is correct by writing down the density $P(\cY|X)$, this is complicated by the fact that $\cY$ take values over a union of
%a product of Stiefel manifolds, one for each possible length $|\cY|$. Instead, 
%Below, we provide a simple proof. % of this fact.

\begin{proposition}
  The set of rejected samples $\cY$ preceding an accepted sample $x$ is independent of $x$: $p(\cY\mid\theta,x) = p(\cY\mid\theta)$. 
We can thus assign $x$ the set $\widehat{\cY}$ of another sample, $\hat{x}$.\label{prop:rej_post}
\end{proposition}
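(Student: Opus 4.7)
The plan is to read off the independence directly from equation~(\ref{eq:rej_jnt}). The joint density in that equation factorizes as
\begin{equation*}
  p(\cY, x \mid \theta) \;=\; A(x,\theta)\,B(\cY,\theta),
\end{equation*}
where $A(x,\theta)=f(x,\theta)/M$ depends only on $x$ and $\theta$, and $B(\cY,\theta)=\prod_{i=1}^{|\cY|}\{q(y_i\mid\theta)-f(y_i,\theta)/M\}$ depends only on $\cY$ and $\theta$. Any joint density that factorizes in this way makes its two arguments conditionally independent given $\theta$, so $p(\cY \mid x,\theta) = p(\cY \mid \theta)$.

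To fill in the details, I would marginalize each side of the factorization. Integrating $\cY$ out of~(\ref{eq:rej_jnt}) over the disjoint union $\bigsqcup_{r\ge 0}\bX^r$ gives $p(x\mid\theta)=f(x,\theta)/Z(\theta)$, which is consistent with the correctness of rejection sampling; the constant $Z(\theta)$ is produced by summing $B(\cY,\theta)$ over $\cY$, and by construction the nonnegativity $q(\cdot\mid\theta)\ge f(\cdot,\theta)/M$ makes this sum finite. Similarly, integrating $x$ out yields $p(\cY\mid\theta)\propto B(\cY,\theta)$. Dividing the joint by the marginal of $x$ eliminates the $x$-dependent factor and leaves exactly $p(\cY\mid\theta)$, which is the statement of the proposition.

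The second half of the claim, that we can replace $\cY$ by the set $\widehat{\cY}$ attached to an independently generated accepted sample $\hat x$, is then immediate. Since $p(\cY\mid x,\theta)$ does not depend on $x$, drawing $\widehat{\cY}$ from $p(\cdot\mid\hat x,\theta)$ for any $\hat x$ produces a sample from the common conditional law $p(\cY\mid\theta)$, which is precisely $p(\cY\mid x,\theta)$ for the observed $x$. This validates Algorithm~\ref{alg:rej_sim}: run the rejection sampler once to produce an accepted $\hat x$, discard $\hat x$, and return the preceding rejected proposals.

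I do not anticipate a genuine obstacle; the only point that needs care is formalizing the sample space of $\cY$, whose size is itself random. Once one writes the joint density with respect to the natural reference measure on $\bigsqcup_{r\ge 0}\bX^r$ (counting measure on $r$ times the product of reference measures on $\bX$), the factorization above is literal and the proof reduces to the standard fact that a product form in the joint implies conditional independence.
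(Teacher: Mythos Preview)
Your argument is correct. You read the independence directly off the product form of the joint density~\eqref{eq:rej_jnt}, which factorizes as $A(x,\theta)\,B(\cY,\theta)$; this is exactly the observation the paper itself makes later in the Convergence properties section, where after deriving $p(\cY\mid x,\theta)$ by Bayes' rule it notes that ``the fact that the right hand side does not depend on $x$ is an alternate proof of Proposition~\ref{prop:rej_post}.''

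The paper's appendix proof, however, takes a different and more probabilistic route: it reinterprets each step of the rejection sampler as first flipping a coin with success probability $Z(\theta)/M$ to decide accept versus reject, and only then drawing the location conditionally on that outcome. Under this decomposition the accept/reject indicators are i.i.d.\ Bernoulli, the rejected locations are i.i.d.\ from $\{Mq(\cdot)-f(\cdot)\}/\{M-Z(\theta)\}$, and the accepted location is an independent draw from $f(\cdot)/Z(\theta)$; independence of $\cY$ and $x$ is then immediate from the construction. Your approach is shorter and purely analytic once~\eqref{eq:rej_jnt} is written down; the paper's approach avoids any explicit integration over the union space $\bigsqcup_{r\ge 0}\bX^r$ and instead gives a structural reason for the independence that does not rely on having first computed the joint density. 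Both are valid, and the paper explicitly acknowledges yours as an alternative.
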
  
%a sample That this procedure is valid can be seen by the following sequence of operations.

\section{Bayesian Inference}
\subsection{Sampling by introducing rejected proposals of the rejection sampler}  \label{sec:latent_hist}
% In practical situations, the parameter $\theta$ is unknown, and the Bayesian approach is to place a prior $p(\theta)$ on $\theta$. 
Given observations $X = \{x_1, \cdots, x_n\}$, and a prior $p(\theta)$, Bayesian inference typically uses Markov chain Monte Carlo to sample from
the intractable posterior $p(\theta\mid X)$. 
%We consider the case where $\theta$ consists of two components, $\theta_1$ and $\theta_2$. 
Split $\theta$ as $(\theta_1, \theta_2)$ so that
the normalization constant factors as $Z(\theta) = Z_1(\theta_1) Z_2(\theta_2)$,
with the first term $Z_1$ simple to evaluate, and $Z_2$ intractable.
Updating $\theta_1$ with $\theta_2$ fixed is easy, and there are situations where we can place a conjugate
prior on $\theta_1$. %Calculating a Metropolis-Hastings acceptance probability for a new $\theta_2$ however is intractable, so that
Inference over $\theta_2$ is a doubly-intractable problem. %the dependence of the intractable term on $\theta_2$ makes even calculating
%Metropolis-Hastings acceptance probabilities intractable. % \citep{murray2006}.

We assume that $p(x\mid\theta)$ has an associated rejection sampling algorithm with proposal density 
$q(x\mid\theta) \ge f(x,\theta)/M$. 
For the $i${th} observation $x_i$, write the preceding set of
rejected samples as $\cY_i = \{y_{i1}, \ldots, y_{i{|\cY_i|}}\}$.
The joint density of all samples, both rejected and accepted, is then
\begin{align}
 P(x_1, \cY_1,\ldots, x_n, \cY_n) &= \prod_{i=1}^n \frac{f(x_i, \theta)}{M} 
\prod_{j=1}^{|\mathcal{Y}_i|}  \left\{q(y_{ij}\mid\theta) - \frac{f(y_{ij}, \theta)}{M}\right\}. \nonumber
\end{align}
This does not involve any intractable terms, so that standard techniques can be applied to update $\theta$. To introduce the
rejected proposals $\cY_i$, we simply follow Algorithm \ref{alg:rej_sim}: draw proposals from $q(\cdot\mid\theta)$ until we have $n$ acceptances, 
with the $i${th} batch of rejected proposals forming the set $\cY_i$.

The ability to produce conditionally independent draws of $\cY$ is important when, %some components of $\theta$  
%have simple conditional posterior distributions. For 
for instance, there exists a conjugate prior $p_1(\theta_1)$ on $\theta_1$ for the likelihood 
$p(x\mid\theta_1,\theta_2)$. Introducing the rejected proposals $\cY_i$
breaks this conjugacy, and the resulting complications in updating $\theta_1$ can slow down mixing, especially when $\theta_1$ is
high dimensional.
A much cleaner solution is to sample $\theta_1$ from its conditional posterior $p(\theta_1\mid X,\theta_2)$, introducing the
auxiliary variables only when needed to update $\theta_2$. After updating $\theta_2$, they can then be discarded.
Algorithm \ref{alg:rej_post} describes this.

{
\begin{algorithm}
\caption{An iteration of the Markov chain for posterior inference over $\theta = (\theta_1, \theta_2)$} \label{alg:rej_post}
\begin{tabular}{p{1.4cm}p{12.2cm}}
\textbf{Input:}  & The observations $X$, and the current parameter values $({\theta}_1,{\theta}_2)$ \\
\textbf{Output:} & New parameter values $(\tilde{\theta}_1,\tilde{\theta}_2)$ \\
\hline
\end{tabular}
\begin{algorithmic}[1]
%  \STATE Introduce the rejected proposals $\cY$ following Proposition \ref{} 
  \STATE Run Algorithm \ref{alg:rej_sim} $|X|$ times, keeping all the rejected proposals $\displaystyle {\cY = \cup_{i=1}^{|X|} \cY_i}$.
  \STATE Update $\theta_2$ to $\tilde{\theta}_2$ with a Markov kernel having $p({\theta}_2\mid X,\cY,\theta_1)$ as stationary distribution.
  \STATE Discard the rejected proposals $\cY$.
  \STATE Sample a new value of ${\theta}_1$ from its posterior $p(\theta_1\mid X,\tilde{\theta}_2)$.
\end{algorithmic}
\end{algorithm}
}

\subsection{Related work}
 One of the simplest and most widely applicable Markov chain Monte Carlo algorithms for doubly-intractable distributions is the exchange sampler 
of \cite{murray2006}. Simplifying an earlier idea by \cite{Moller2006}, this algorithm effectively amounts to the following:
 given the current parameter $\theta_{curr}$, propose a new parameter $\theta_{new}$ according to some proposal distribution.
Additionally, generate a dataset of $n$ `pseudo-observations' $\{\hat{x}_i\}$ from $p(x\mid\theta_{new})$.
%Then, accept $\theta_{new}$ with probability
%\begin{align}
%  acc = \frac{q(\theta_{curr}|\theta_{new}) p(\theta_{new}) \prod_{i=1}^n p(x_i|\theta_{new}) \prod_{i=1}^n p(\hat{x}_i|\theta_{old})}
%             {q(\theta_{new}|\theta_{curr}) p(\theta_{old}) \prod_{i=1}^n p(x_i|\theta_{old}) \prod_{i=1}^n p(\hat{x}_i|\theta_{new})} \label{eq:exch}
%\end{align}
%Effectively, having proposed a new parameter and a new dataset, 
The exchange algorithm then proposes swapping the parameters associated with datasets. \cite{murray2006} show that 
all intractable terms cancel out in the resulting acceptance 
probability, and that the resulting Markov chain has the correct stationary distribution.

While the exchange algorithm is applicable whenever one can sample from the likelihood $p(x\mid\theta)$, 
it does not exploit the mechanism used to produce these samples. 
When the latter is a rejection sampling algorithm, each pseudo-observation is preceeded by a sequence of rejected proposals.
These are all discarded, and only the accepted proposals are used to evaluate the new parameter $\theta_{new}$.
By contrast our algorithm explicitly instantiates these rejected proposals, so that they can be used to make \emph{good} proposals. 
In our experiments, we use a Hamiltonian Monte Carlo sampler on the augmented space
and exploit gradient information to make non-local moves with high probability of acceptance.
For reasonable acceptance probabilities under the exchange sampler, one must make local updates to $\theta$, or resort to complicated annealing schemes.

Another framework for doubly intractable distributions is the pseudo-marginal approach of \cite{AndRob10}. The idea here is that even if we 
cannot exactly evaluate the acceptance probability, it is sufficient to use a positive, unbiased estimate: this will still result in a Markov chain with the 
correct stationary distribution. In our case, instead of requiring an unbiased estimate, we require an upper bound $M$ to the normalization constant $Z(\theta)$.
Additionally, like the exchange sampler, the pseudo-marginal method provides a mechanism to evaluate a proposed parameter $\theta_{new}$; how to make good 
proposals is less obvious. Other papers are \cite{beskos05} and \cite{walker11}, the latter requiring a bound on the target density of interest.

 Most closely related to our ideas is a sampler from \cite{adams_gpds}, see also  Section~\ref{sec:gpds}. Their problem also
involved inferences on the parameters governing the output of a rejection sampling algorithm. Like us, they  proceeded by augmenting the state space to
include the rejected proposals $\cY$, and like us, given these auxiliary variables, they used Hamiltonian Monte Carlo to efficiently update parameters. 
However, rather than generating independent realizations of $\cY$ when needed, \cite{adams_gpds}
outlined a set of Markov transition operators to perturb the current configuration of $\cY$,
while maintaining the correct
stationary distribution. With prespecified probabilities, they proposed adding a new variable to $\cY$, deleting a variable from $\cY$
and perturbing the value of an existing element in $\cY$. These local updates to $\cY$ can slow down Markov chain mixing, require the user to specify
a number of parameters, and  also involve calculating Metropolis-Hastings acceptance probabilities for each local step. Furthermore, the Markov nature 
of their updates require them to maintain the rejected proposals at all times, complicating inferences over other parameters.
Our algorithm is much simpler and cleaner.

%Additionally, in our problem, adding a persistent
%set of variables $\cY$ to the state space of the Markov chain leads to complications: note that $\cY$ depends on the parameters $G$ and $H$, so that conditioned on $\cY$,
%the posterior distributions over $G$ and $H$ are no longer conjugate. %Rather than producing conditionally independent samples of these variables,
%Thus, with $\cY$ instantiated, the sampler of \cite{adams_gpds} would require complicated alternatives to the simple steps in Section \ref{sec:update_conj}
% to explore the high-dimensional spaces these variables live in. Instead, Proposition \ref{prop:rej_post} allows our sampler
%to discard the set $\cY$, reintroducing it only when we need to update the doubly intractable parameter $\bkappa$.

%Another auxiliary variable approach was proposed recently in \cite{walker11}; however this requires bounding $\ml(X\mid\bkappa, G)$ uniformly 
%over all three variables. Even if we allow this (by limiting the components $\kappa_i$ to a compact sets), the algorithm will scale with the volume of the 
%Stiefel manifold, and quickly becomes unmanageable.

\section{Convergence properties}

Write the Markov transition density of our chain as $k(\htheta\mid\theta)$, and the $m$-fold transition density as $k^m(\htheta\mid\theta)$.
For simplicity, we suppress that these depend on $X$. The Markov chain is uniformly
ergodic if constants $C$ and $\rho$ exist such that for all $m$ and $\theta$,
%\begin{align}
$
 \int_{\Theta} | p(\htheta\mid X) - k^m(\htheta\mid\theta)| \mathrm{d}\htheta \le C \rho^m. \nonumber
 $
%\end{align}
The term to the left is twice the total variation distance between the desired posterior and the state of the Markov chain initialized at $\theta$ after $m$ iterations. 
Small values of $\rho$ imply faster mixing.

The following minorization condition is sufficient for uniform ergodicity \citep{jones2001}: there exists a probability density $h(\htheta)$ and a $\delta > 0$ such that
for all $\theta,\htheta \in \Theta$, 
\begin{align}
 k(\htheta\mid\theta) \ge \delta h(\htheta). \label{eq:unif_erg}
\end{align}
When this holds, the mixing rate $\rho \le 1-\delta$, so that a large $\delta$ implies rapid mixing.

Our Markov transition density first introduces the rejected proposals $\cY$, and then conditionally
updates $\theta$. The set $\cY_i$ preceeding the $i$th observation takes values in the union space 
$\displaystyle \bU \equiv \cup_{r=0}^{\infty} \bX^r$. % with $r = |\cY_i|$.
The output of the rejection sampler, including the $i$th observation, lies in the product space $\bU \times \bX$ with density given by equation \eqref{eq:rej_jnt},
so that any $(\cY, x) \in (\bU \times \bX)$ has probability
\begin{align}
 p(\cY,  x\mid\theta) = \frac{f(x, \theta)}{M}  \lambda(\mathrm{d}x) \prod_{i=1}^{|\mathcal{Y}|}  \left\{q(y_i\mid\theta) - \frac{ f(y_i, \theta)}{M} \right\} \lambda(\mathrm{d}y_i). \nonumber
\end{align}
Here, $\lambda$ is the measure with respect to which the densities $f$ and $q$ are defined, and it is easy to see that the above quantity 
integrates to $1$. From Bayes' rule, the conditional density over $\cY$ is 
\begin{align}
 p(\cY \mid x, \theta) = \frac{Z(\theta)}{M}  \prod_{i=1}^{|\mathcal{Y}|}  \left\{q(y_i\mid\theta) - \frac{ f(y_i, \theta)}{M} \right\} \lambda(\mathrm{d}y_i). \nonumber
\end{align}
The fact that the right hand side does not depend on $x$ is an alternate proof of Proposition \ref{prop:rej_post}.
This density characterizes the data augmentation step of our sampling algorithm. In practice, we need as many draws from this density as there are observations. 

The next step involves updating $\theta$ given $(\cY, X,\theta)$, and depends on the problem at hand.
We simplify matters by assuming we can 
sample from $p({\theta}\mid\cY, X)$ independently of the old $\theta$: this is the classical data augmentation algorithm. We also assume that 
the functions $f(\cdot,\theta)$ and $q(\cdot\mid\theta)$ are uniformly bounded from above and below by finite, positive quantities $(B_f, b_f)$ and 
$(B_q, b_q)$ respectively, and that $\int_{\bX} \lambda(\mathrm{d}x) < \infty$.
It follows that there exists positive numbers $r$ and $R$ that minimize $1-\frac{f(x,\theta)}{\{MZ(\theta)\}}$ and $\frac{Z(\theta)}{M}$.
%, that these exist follows from the earlier bounds.
%$\beta = br/B$. We can easily verify that % so that the following condition holds:
%%\begin{align}
%$
%  p({\htheta}\mid\cY, X) \ge \beta^{|\cY|} p(\htheta\mid X).$ %\nonumber
%%\end{align}
We can now state our result.
\begin{theorem}
  Assume that $\int_{\bX} \lambda(\mathrm{d}x) < \infty$ and that positive bounds $b_f, B_f, b_q, B_q$ exist with $r$ and $R$ as defined
  earlier. %the proposal density $q(x\mid\theta)$ has uniform lower and upper bounds $b$ and $B$, and the probability of rejection is
%bounded below by $r$, with $r,b >0$.
Further assume we can sample from the conditional 
$p({\theta}\mid\cY, X)$. Then our data augmentation algorithm is uniformly ergodic with mixing rate $\rho$ upper bounded by
$ \rho = 1-\left\{\frac{b_f}{B_f\left( \beta + R^{-1}\right)}\right\}^n$, 
%$\left[1 - \frac{1}{\left\{M(1-\beta) - \beta\right\}^n}\right]$, 
where $\beta = b_qr/B_q$ and $n$ is the number of observations.  \label{thrm:conv_rate}
\end{theorem}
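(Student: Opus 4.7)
The plan is to verify the minorization condition~\eqref{eq:unif_erg} with $h(\hat\theta) = p(\hat\theta\mid X)$, the target posterior, since this immediately delivers the advertised rate $\rho \le 1-\delta$.

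First I would establish a uniform ratio bound on the augmentation density. Using the explicit form of $p(\cY\mid X,\theta)$ given in the excerpt, the ratio $p(\cY\mid X,\theta)/p(\cY\mid X,\theta')$ splits into $\bigl(Z(\theta)/Z(\theta')\bigr)^n$ and a product of $|\cY|$ factors of the form $[q(y\mid\theta)-f(y,\theta)/M]/[q(y\mid\theta')-f(y,\theta')/M]$. Setting $V := \int_{\bX}\lambda(\dif x) < \infty$ and using $b_f\le f\le B_f$ gives $Z(\theta)\in[b_f V,B_f V]$, so the first factor is at least $(b_f/B_f)^n$; the pointwise bounds $b_q r\le q-f/M\le B_q$ then bound each remaining factor below by $\beta = b_q r/B_q$. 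Multiplying the resulting inequality by $p(\theta'\mid X)$ and integrating over $\theta'$ yields
\[
  p(\cY\mid X,\theta) \;\ge\; (b_f/B_f)^n\,\beta^{|\cY|}\,p(\cY\mid X).
\]

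Substituting into $k(\hat\theta\mid\theta) = \int p(\cY\mid X,\theta)\,p(\hat\theta\mid X,\cY)\,\dif\cY$ and rewriting $p(\cY\mid X)\,p(\hat\theta\mid X,\cY) = p(\hat\theta\mid X)\,p(\cY\mid X,\hat\theta)$ reduces the problem to lower-bounding $E\bigl[\beta^{|\cY|}\mid X,\hat\theta\bigr]$ uniformly in $\hat\theta$. Inspecting the explicit density shows that the sizes $|\cY_i|$ are conditionally independent geometric random variables with success probability $u = Z(\hat\theta)/M$, so each per-observation factor equals $u/(1-\beta(1-u))$, an expression increasing in $u$ and hence minimized at $u = R$; a short manipulation using $1-\beta\le 1$ shows this minimum is at least $(\beta + R^{-1})^{-1}$, giving $E[\beta^{|\cY|}\mid X,\hat\theta]\ge (\beta+R^{-1})^{-n}$.

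Combining the two estimates yields $k(\hat\theta\mid\theta)\ge [b_f/(B_f(\beta+R^{-1}))]^n\,p(\hat\theta\mid X)$, which is exactly the minorization~\eqref{eq:unif_erg} with the claimed $\delta$; the standard criterion \citep{jones2001} then gives $\rho\le 1-\delta$. I expect the main obstacle to be the uniform lower bound $q-f/M\ge b_q r$ in the first step: it requires a positive gap between $f/M$ and the envelope $q$ uniformly in $x$ and $\theta$, which is precisely why the constant $r$ appears in the hypotheses; the geometric-expectation calculation in the second step is then routine algebra.
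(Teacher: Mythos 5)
Your proof is correct and takes essentially the same route as the paper's: both verify the minorization condition \eqref{eq:unif_erg} with $h(\hat\theta)=p(\hat\theta\mid X)$ via the uniform ratio bound $(b_f/B_f)^n\beta^{|\cY|}$ and the same geometric-series evaluation of $E[\beta^{|\cY|}]$, lower-bounded by $(\beta+R^{-1})^{-n}$. The only cosmetic difference is that you state the key inequality as $p(\cY\mid X,\theta)\ge (b_f/B_f)^n\beta^{|\cY|}p(\cY\mid X)$ and take the geometric expectation under $\hat\theta$, while the paper states the Bayes-equivalent bound $p(\theta\mid X,\cY)\ge (b_f/B_f)^n\beta^{|\cY|}p(\theta\mid X)$ and takes the expectation under the current $\theta$ --- interchangeable since the final bound is uniform in the parameter.
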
  
Despite our assumptions, our theorem has a number of useful implications. The ratio $b_f/B_f$ is a measure of how `flat' the function $f$ is,
and the closer it is to one, the more efficient rejection sampling for $f$ can be. From our result, the smaller the value, the larger the bound on $\rho$,
suggesting slower mixing. This is intuitive: more rejected proposals $\cY$ will increase coupling between successive $\theta$'s in the 
Markov chain.
On the other hand, a small $b_q/B_q$ suggests a proposal distribution tailored to $f$, and our result shows this implies faster mixing.
The numbers $r$ and $1/R$ are measures of mismatch between the the target and proposal density, with small values giving better mixing.
Finally, more observations $n$ result in slower mixing. Again, from the construction of our chain this makes sense. 
We suspect this last property holds for most exact samplers
for doubly-intractable distributions, though we are unaware of any such result.

Even without assuming we can sample from $p(\theta\mid\cY,X)$, our ability to 
sample $\cY$ independently means that the marginal chain over $\theta$ is still Markovian. By contrast, existing approaches (\cite{adams_gpds, walker11}) only produce dependent updates
in the complicated auxiliary space: they sample from $p(\hat{\cY}\mid\theta, \cY,X)$ by making local updates to $\cY$. Consequently, these chains are Markovian only in the 
complicated augmented 
space, and the marginal processes over $\theta$ have long-term dependencies. Besides affecting mixing, this can also complicate analysis.

In the following sections, we apply our sampling algorithm to three problems, one involving a Bayesian analysis of flow-cytometry data, the second,
Bayesian inference for the matrix Langevin distribution, and the last the
Gaussian process density sampler of \cite{adams_gpds}. %Adapting the theory of this section to these problems involves considerable work, and we leave this for the future.

\section{Flow cytometry data} \label{sec:flow}
%\cite{verb2014} Multivariate mixture of Erlangs same data

%\cite{Alai2013} lifetime data for different cohorts of the Norwegian population and annuity valuation for truncated lifetimes using
%truncated multivariate Gamma distribution

%\cite{Lee2012} apply truncated multivariate Gaussian mixture models for truncated flow cytometry data.

  \begin{figure}
  \centering
    \includegraphics[width=.27\textwidth]{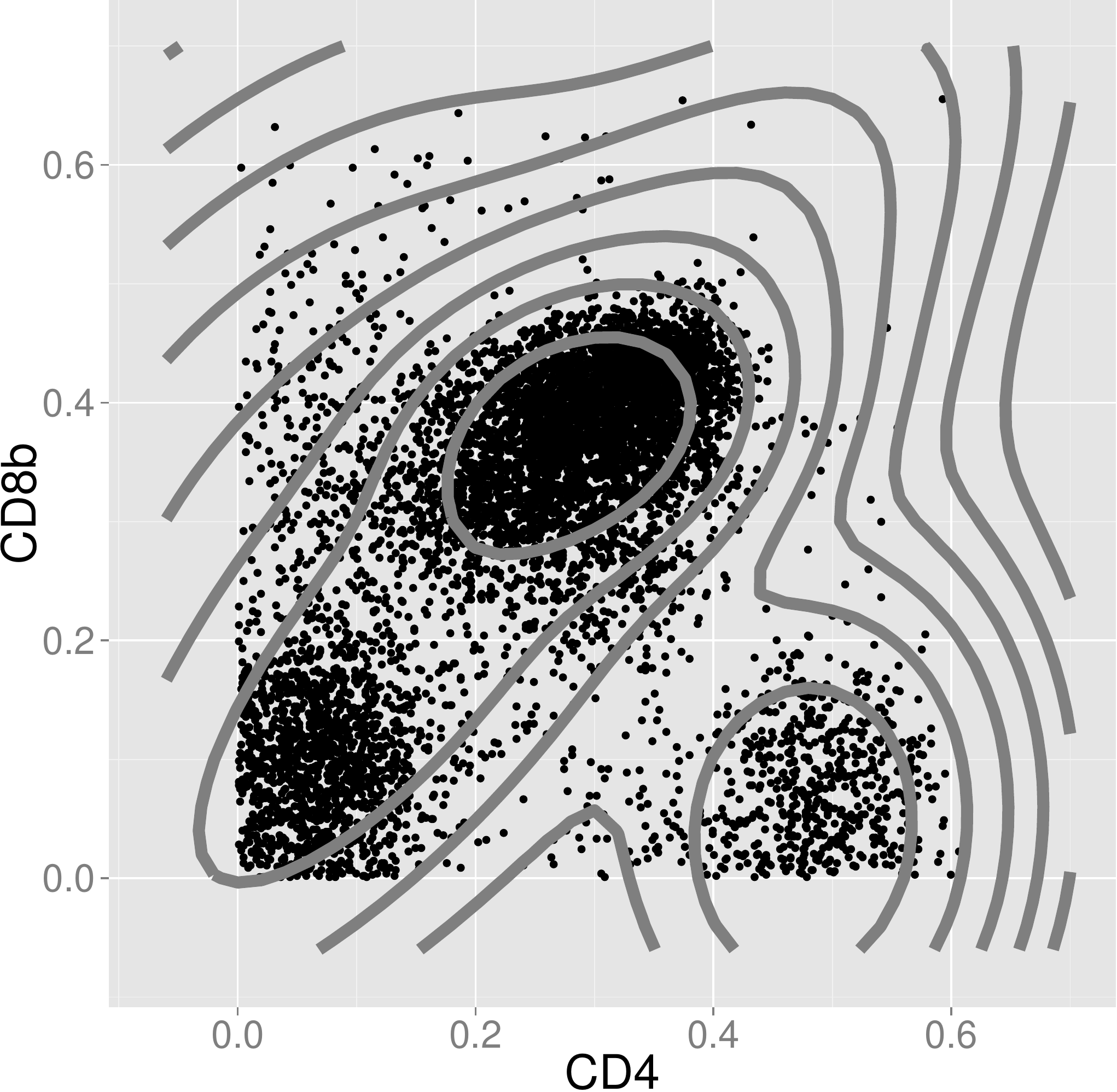}
    \includegraphics[width=.27\textwidth]{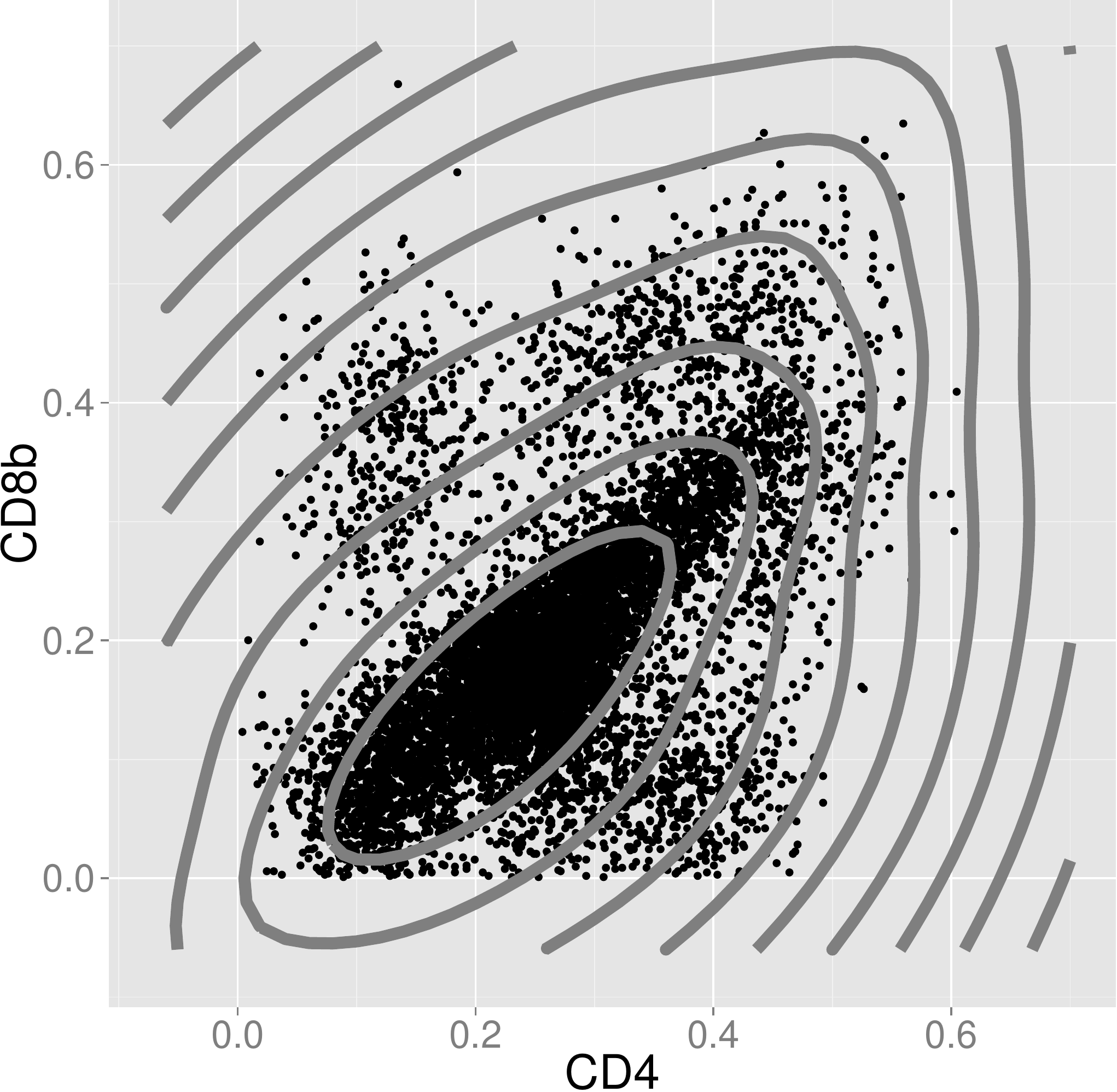}
\caption{ Scatterplots of first two dimensions for control (left) and positive (right) group. Contours represent
log posterior-mean densities under a Dirichlet process mixture.}
  \label{fig:plot_gvhd}
  \end{figure}

We apply our algorithm to a dataset of flow cytometry measurements from patients subjected to bone-marrow transplant \citep{Brink07}. This graft-versus-host disease
dataset consists of 6,809 control and 9,083 positive observations depending on whether donor immune cells attack host cells.
Each observation consists of four biomarker measurements truncated to lie between $0$ and $1024$, though more complicated truncation rules are often 
used according to operator judgement \citep{Lee2012}. We normalize and plot the first two dimensions corresponding to the markers CD4 and CD8b in 
Figure \ref{fig:plot_gvhd}.
Truncation complicates the clustering of observations into homogeneous groups, an important step in the flow-cytometry pipeline called
gating.
Consequently, \cite{Lee2012} propose an EM algorithm for a truncated mixture of Gaussians, which must be adapted if
different mixture components or truncation rules are used. 

%We regard the normalized observations as the output of a rejection sampling algorithm that discards proposals outside the unit hypercube.
%After imputing these according to algorithm \ref{alg:rej_sim}, we can easily apply standard Bayesian models without worrying about 
%edge effects and missing probability.
We model the untruncated distribution for each group as a Dirichlet process mixture of Gaussians \citep{Lo1984}, with points
outside the four-dimensional unit hypercube discarded to form the normalized dataset.
The Dirichlet process mixture model is a flexible nonparametric prior over densities %that bypasses model selection issues like
%choosing the number of components in a Gaussian mixture model, and is 
parametrized by a concentration parameter $\alpha$ and a base probability measure.
We set $\alpha = 1$, and for the base measure, which gives the distribution over cluster parameters, we use a normal-inverse-Wishart distribution. 
%There exist standard techniques to both sample new data points for a Dirichlet process mixture, as well as to sample parameters given the augmented 
%data.
Given the rejected variables, we can use standard techniques to update a representation of the Dirichlet process. 
We follow the blocked-sampler of \cite{IshJam2001} based on the stick-breaking representation of the Dirichlet process, using a truncation level of 
$50$ clusters. 
This corresponds to updating $\theta$, step 2 in Algorithm~\ref{alg:rej_post}. 
Having done this, we discard the old rejected samples, and produce a new set by drawing from a $50$-component mixture of Gaussians
model, corresponding to step 1 in Algorithm~\ref{alg:rej_post}.

Figure~\ref{fig:plot_gvhd} shows the log mean posterior densities for the first two dimensions from 10,000 iterations. %Control vs Pos.
While the control group has three clear modes, these are much less pronounced in the positive group.
Directly modeling observations with a Gaussian mixture model obscured this by forcing modes forced away from the edges. 
One can use components with bounded support in the mixture model, such as a Dirichlet process mixture of Betas; however, these do not reflect the 
underlying data generation process, and are unsuitable when different groups have different
truncation levels. By contrast, it is easy to extend our modeling ideas to allow groups to share
components \citep{TehJorBea2006}, thereby allowing better identification of disease predictors.

Our sampler took less than two minutes to run 1,000 iterations, not much longer than a typical Dirichlet process sampler for 
datasets of this size. The average number of augmented points was 3,960 and 4,608 for the two groups. We study our  sampler more systematically 
in the next section, but this application demonstrates the flexibility and simplicity of our main idea.

%this can easily be extended to identify shared clusters among the two groups \cite{Teh2006}.

\section{Bayesian inference for the matrix Langevin distribution} \label{sec:Bays_inf}

The Stiefel manifold $V_{p,d}$ is the space of all $d \times p$ orthonormal matrices, %(i.e.\ all $p$-frames in $\mathbb{R}^d$).
that is, $d \times p$ matrices $X$ such that $X^TX=I_p$, where $I_p$ is the $p \times p$  identity matrix.
%\begin{equation}
%V_{p,d}=\{X\in M(d,p): X^TX=I_p\}. \nonumber
%\end{equation}
When $p=1$, this is the $d-1$ hypersphere $S^{d-1}$, and when $p=d$, this is the space of all orthonormal matrices
$O(d)$.
%$V_{p,d}$  is a Riemannian manifold of dimension $dp-p-p(p-1)/2=p(2d-p-1)/2$. It can be embedded  into the Euclidean space $M(d,p)$ of dimension $dp$ with  the inclusion map as a natural embedding, and is thus a submanifold of $\mathbb R^{dp}$.
%Let $G\in V_{p,d}$, and  $G_1$ be a matrix of size $d\times (d-p)$ such that $[G: G_1]$ is in $O(d)$, the group of $d$ by $d$ orthogonal matrices.
%The volume form on the manifold is $\lambda(\dif G)=\wedge_{i=1}^p\wedge_{j=i+1}^{d}g_j^T\dif g_i$ where $g_1,\ldots,g_p$ are the columns of
%$G$, $g_{p+1},\ldots, g_d$ are the columns of $G_1$ and $\wedge$ represents the wedge product \citep{muir2005}.  If $p=d$, that is when $G\in O(d)$,
%one can represent $\lambda(\dif G)=\wedge_{i<j}g_j^T\dif g_i$. Note that $\lambda(\dif G)$ is invariant under the left action of the orthogonal group $O(d)$ and
%the right action of the orthogonal group $O(p)$, and forms the Haar measure on the Stiefel manifold.
%For more details on the Riemannian structure of the Stiefel
%manifold, we refer to \cite{Edelman98thegeometry}.
%%Let $S$ be some measurable set on the Stiefel manifold, one can define the  measure $\lambda$ as  $\lambda(S)=\int_S v(\dif X)$.
Probability distributions on the Stiefel manifold play an important role in statistics, signal processing and machine learning, with applications ranging 
from studies of orientations of orbits of comets and asteroids to principal components analysis to the estimation of rotation matrices.  
%\subsection{The matrix Langevin distribution}  \label{sec:param}
The simplest such distribution is the matrix Langevin distribution,
%A random variable $X \in V_{p,d}$ distributed according to this 
an exponential-family distribution whose density with respect to the invariant Haar
volume measure \citep{Edelman98thegeometry} is
%\begin{equation}
%\label{eq-paraML}
$\ml(X\mid F)=\etr(F^TX)/Z(F)$. %  \qquad  X \in V_{p,d}.$ %\nonumber
%\end{equation}
Here $\etr$ is the exponential-trace, and $F$ is a $d\times p$ matrix. The normalization constant $Z(F)=\mathstrut_0F_1(\frac{1}{2}d, \frac{1}{4}F^TF)$  is the hypergeometric
function  with matrix arguments, evaluated at $\frac{1}{4}F^TF$ \citep{chikusebook}.
Let $F = G \bkappa H^T$ be the singular value decomposition of $F$, where $G$ and $H$ are $d \times p$ and $p \times p$ orthonormal matrices, and $\bkappa$ is
a positive diagonal matrix. 
We parametrize $\ml$ by $(G, \bkappa, H)$, and one can think of $G$ and $H$ as orientations, with $\bkappa$ controlling the
concentration in directions determined by these orientations.
%Since $F^TF = \bkappa \bkappa$, the normalization constant depends only on $\bkappa$,
Large values of $\bkappa$ imply concentration along the associated directions, while setting $\bkappa$ to zero gives the uniform distribution on
the Stiefel manifold. It can be shown \citep{khatri1977} that $\mathstrut_0F_1(\frac{1}{2}d, \frac{1}{4}F^TF) =  \mathstrut_0F_1(\frac{1}{2}d, \frac{1}{4}\bkappa^T\bkappa)$, so that
this depends only on $\bkappa$. We write it as $Z(\bkappa$).
%The mode of the distribution is  at $GH^T$.
%, and
%from the characteristic function of $X$, one can show $E(X)=FU$, where the $(i,j)$th element of the matrix $U$ is given by
%\begin{equation*}
%U_{ij}=2\dfrac{\partial \log \mathstrut_0F_1(\frac{1}{2}d, \frac{1}{4}F^TF) }{\partial (F^{T}F)_{ij}}.
%\end{equation*}

%Consider $n$ observations  $X_1,\ldots,X_n$ drawn independently from $\ml(\cdot|G\bkappa H^T)$. A simple approach to characterizing these observations is via a maximum likelihood
%estimate of the parameter $F$ \cite[Section 5.2]{chikusebook}.  By contrast, a Bayesian approach specifyies a prior distribution on the parameters, 
%with inferences carried out based on the resulting posterior. 
In our Bayesian analysis, we place independent priors on 
$\bkappa, G$ and $H$.
The latter two lie on the Stiefel manifolds $V_{p,d}$ and $V_{p,p}$, and we place matrix Langevin priors $\ml(\cdot\mid F_0)$ and $\ml(\cdot\mid F_1)$ on 
these: we will see below that these are conditionally conjugate. %resulting in simple posterior update rules for these matrices.
We place independent $\text{Gamma}(a_0,b_0)$ priors on the diagonal elements of $\bkappa$. %writing $\kappa_i$ for element $(i,i)$ of $\bkappa$.
%In the following, we parametrize the matrix Langevin by $(G, \bkappa, H)$, the terms of the SVD of the parameter $F$. 
%Letting $F_0$ and $F_1$
%({\color{red}I used $F_0$ as the distribution corresponding to the true density, I can change that later.})
%parametrize the priors on $H$ and $G$, %. When
%%this is the all-zero matrix, we have a uniform prior on $H$, though more informative priors may also be useful. In particular, it is sometimes desirable
%%to restrict $F$ to be orthogonal; this amounts to restricting $H$ to be the identity matrix. Let $F_1$ parametrize the matrix Langevin prior on $G$.
%the overall generative process for $n$ observations  $X_1,\ldots,X_n$ is:
%\begin{align}
%  \kappa_j &\sim \text{Gamma}(a_0,b_0) & \quad \text{for $j = 1,\ldots, p$,}  \nonumber \\
%  H        &\sim {\ml}(H|F_0), \quad G        \sim {\ml}(G|F_1),&   \nonumber \\
%  X_i      &\sim {\ml}(X_i|G\bkappa H^T)& \quad \text{for $i = 1,\ldots, n$.   \nonumber }
%\end{align}
However, the difficulty in evaluating the normalization constant $Z(\bkappa)$ 
makes posterior inference over $\bkappa$ doubly intractable.  Thus, in a 2006 University of Iowa PhD thesis, Camano-Garcia %\cite{camanothesis} 
keeps $\bkappa$ constant, while \cite{hoff2009jrssb} uses a first-order Taylor expansion of the intractable term to run an approximate sampling algorithm.
Below, we show how fully Bayesian inference can be carried out over this quantity as well.

\subsection{A rejection  sampling algorithm} \label{sec:prior_sim}
We first describe a rejection sampling algorithm from \cite{hoff2009} to sample from $\ml$.
%For parameterizations other than the uniform distribution (i.e.\ for $\bkappa > 0$), even this is non-trivial.
For simplicity, assume $H$ is the identity matrix. In the general case, we simply rotate  the resulting draw 
by $H$, since if $X \sim \ml(\cdot\mid F)$, then $XH \sim \ml(\cdot\mid FH^T)$.
At a high level, the algorithm sequentially proposes vectors
from the matrix Langevin on the unit sphere: this is also called von Mises-Fisher distribution and is easy to simulate
\citep{wood1994}.
The mean of the $r${th} vector is column $r$ of $G$, $G_{[:r]}$, projected onto the nullspace of the earlier vectors, $N_r$.
This sampled vector is then projected back onto $N_r$ and normalized, and %giving a unit vector orthogonal to the earlier $r-1$ vectors.
the process is repeated $p$ times. Call the resulting
distribution $\seq$; for more details, see Algorithm \ref{alg:rej_smplr} and \cite{hoff2009}.
{
\vspace{.1in}
\begin{algorithm}
\caption{Proposal $\seq(\cdot\mid G, \bkappa)$ for matrix Langevin distribution \citep{hoff2009}}\label{alg:rej_smplr}
\begin{tabular}{p{1.4cm}p{12.2cm}}
%\hline
\textbf{Input:}  & Parameters $G,\bkappa$; write $G_{[:i]}$ for column $i$ of $G$, and $\kappa_i$ for element $(i,i)$ of $\bkappa$ \\
\textbf{Output:} & An output  $X \in V_{p,d}$; write $X_{[:i]}$ for column $i$ of $X$ \\
\hline
\end{tabular}
\begin{algorithmic}[1]
  \STATE Sample $X_{[:1]} \sim \ml(\cdot\mid \kappa_1 G_{[:1]})$. 
  For $r \in \{2,\cdots p\}$
  \begin{enumerate}[(a)]
    \item Construct $N_r$, an orthogonal basis for the nullspace of $\{X_{[:1]},\cdots X_{[:r-1]} \}$.
    \item Sample $z \sim \ml(\cdot\mid \kappa_r N^T_r G_{[:r]})$, and
    \item Set $X_{[:r]} = z^T N_r/ \|z^T N_r\| $.
  \end{enumerate}
\end{algorithmic}
\end{algorithm}
}
Letting $I_k(\cdot)$ be the modified Bessel function of the first kind,
$\seq$ is a density on the Stiefel manifold with
\begin{align}
  \seq(X\mid G, \bkappa) &= \left\{\prod_{r=1}^p \frac{ \|\kappa_r N^T_r G_{[:r]}/2 \|^{(d-r-1)/2 }}{ \Gamma(\frac{d-r+1}{2} ) I_{(d-r-1)/2}(\| \kappa_r N^T_r G_{[:r]} \|)} \right\} \etr(\bkappa G^T X).
  %\nonumber \\ %\label{eq:hoff_seq}\\
%               &:= \etr(\bkappa G^T X)/D(X, \bkappa, G) \nonumber
\end{align}
{Write $D(X, \bkappa, G)$ for the reciprocal of the term in braces. Since
$I_k(x)/x^k$ is an increasing function of $x$, and  $\|N^T_r G_{[:r]}\| \le \|G_{[:r]}\| = 1$, we have the following bound $D(\bkappa)$ for $D(X, \bkappa, G)$:}
\begin{align}
 D(X,\bkappa, G) &\le \prod_{r=1}^p  \frac{ \Gamma(\frac{d-r+1}{2} ) I_{(d-r-1)/2}(\| \kappa_r \|)}{ \|\kappa_r/2 \|^{(d-r-1)/2 }} = D(\bkappa).\qquad \qquad \qquad \nonumber
\end{align}
This implies $\etr(\bkappa G^T X) \le D(\bkappa) \seq(X\mid G,\bkappa) $, allowing
the following rejection sampler: draw a sample $X$ from $\seq(\cdot)$, and accept with probability
$D(X, \bkappa, G)/D(\bkappa)$. The accepted proposals come from $\ml(\cdot\mid G,\bkappa)$, and for samples from $\ml(\cdot\mid G,\bkappa,H)$,
post multiply these by $H$.
\begin{comment}
While this approach to sampling from the prior is typically adequate, we obtain a different upper envelope to the Matrix Langevin distribution using results
from \citep{Luke1972}.  In that work, two bounds were provided, both ideal for our purposes:
\begin{align}
  \frac{\Gamma(\nu+1)I_{\nu}(x)}{(x/2)^{\nu}} &\le e^x \left( \frac{1}{2\nu + 3} + \frac{2(\nu+1)}{2\nu+3}\left[1 + \frac{2\nu+3)x}{2(\nu+1)}\right]^{-1} \right)\quad x\ge0,\ \nu \ge -1/2,\\
  \frac{\Gamma(\nu+1)I_{\nu}(x)}{(x/2)^{\nu}} &\le \cosh(x) \quad x\ge0,\ \nu \ge -1/2
\end{align}
While, the latter is a bit more convenient, the former is tighter, and we shall use it in the following (referring to it as $b_{\nu}(x)$). For equation \eqref{eq:hoff_seq},
we have the following bound $B(X)$ on $K(X)$:
\begin{align}
 K(X) \le \prod_{r=1}^p  b_{d-r-1}(N^T_r F_{[,r]}) := B(X)
\end{align}
Now, a proposal from $\seq$ is acccepted with probability $K(X)/B(X)$.
By allowing the bound to depend on $X$, we can control the discrepancy between $B(X)$ and $K(X)$ resulting in fewer rejected samples, and greater efficiency.
Note though that our bound $B(X)$ is not uniformly tighter that the bound $B_u$ of \citep{hoff2009}; in particular at the point $X = H$ on the Stiefel manifold,
$B(X) > K(X) = B_u$. On the other hand, our bound has the property that $\inf_{X \in V_{p,d}} K(X)/B(X) > \inf_{X \in V_{p,d}} K(X)/B_u$. This latter property will
be crucial for efficient posterior sampling. Of course, one can always combine the two bounds, defining $\hat{B}(X) = \min (B_u, B(X))$.
%is useful in its own right, providing a more efficient way to sample from the Matrix Langevin. Additionally, our approach will prove crucial for one of our
%algorithms for posterior sampling; with the bound of \cite{hoff2009} being too loose to be practical.
\end{comment}

\subsection{Posterior sampling}   \label{sec:post_sim}

%We now return to the problem of posterior inference.
%With independent priors %$P(G), P(H)$ and $P(\bkappa)$ 
%on the parameters $G, H, $ and $\bkappa$, %. In particular, we assume the diagonal elements of $\kappa$ are
%independent Gamma variables, while $H$ itself is distributed according to the matrix Langevin distribution with parameter $F_0$ (setting all elements of $F_0$ to $0$
%results in a uniform prior on $H$).
Given a set of $n$ observations $\{X_i\}$, and writing $S = \sum_{i=1}^n X_i$, we have: 
\begin{align}
  p(G, \bkappa, H \mid X_i\}) & \propto {\etr(H \bkappa G^T S)p(H) p(G) p(\bkappa)}/{Z(\bkappa)^{n}}. \nonumber
\end{align}

%Recall that the normalization constant is rotation invariant (and thus independent of $H$).
At a high level, our approach is a Gibbs sampler that sequentially updates $H, G$ and $\bkappa$. %As described below, 
The pair of matrices $(H,G)$ correspond to the tractable $\theta_1$ in Algorithm~\ref{alg:rej_post}, while $\bkappa$ corresponds to $\theta_2$.
Updating the first two is straightforward, while the third requires our augmentation scheme. \\
\hspace{.1in}\\
\noindent \textbf{1. Updating $G$ and $H$:} \label{sec:update_conj} 
%The posterior distribution of $H$ is given by
%\begin{align}
%  p(H\mid X_i\},\bkappa, G) &= p(H\mid S, \bkappa, G) \propto \etr(H\bkappa G^T S)p(H).
{With a matrix Langevin prior on $H$, the posterior is }
\begin{align}
  p(H\mid X_i,\bkappa, G) & \propto \etr\left\{(S^T G \bkappa + F_0)^T H\right\}. \nonumber
\end{align}
This is just the matrix Langevin distribution over rotation matrices, and one can sample from this following Section \ref{sec:prior_sim}.
From here onwards, we will rotate the observations by $H$, allowing us to ignore this term. Redefining $S$ as $SH$, the posterior over $G$ is also
a matrix Langevin:
%the conditional posterior of $G$ is given by:
\begin{align}
  p(G\mid X_i\},\bkappa) & \propto \etr\left\{(S \bkappa + F_1)^T G\right\}. \nonumber
\end{align}

\noindent \textbf{2. Updating $\bkappa$:} 
%By itself, this step intractable, since it involves
%evaluating the normalization constant $Z(\bkappa)$. % a hypergeometric function of matrix argument. % is intractable.
Here, we exploit the rejection sampler scheme of the previous section, %any observation on the Stiefel manifold is preceded by a set (possibly of size zero)
and instantiate the rejected proposals using Algorithm \ref{alg:rej_sim}. 
%A simple idea is to first conditionally sample this set of rejected proposals, and \emph{then} propose
%a new $\bkappa^*$ given these auxiliary variables. We show that the rejected events can be sampled efficiently and easily, and that since $\seq$
%does not involve any intractable terms, standard MCMC techniques can be applied to make efficient proposals for $\bkappa$.
%%is significantly more efficient that the exchange sampler
%(which proposes a new $\bkappa^*$ naively, and then generates (but discards) the same set of variables to evaluate this proposal).
%While the exchange sampler is applicable to any model from which one can draw perfect samples, our approach can offer advantages when rejection sampling is used (see \cite{adams_gpds}, and \cite{beskos05} for other such models).
%More generally, such a data-augmentation approach can be help improve approaches suggested in
%\citep{AndRob10}, but we leave this for elsewhere.
%  Under the rejection sampling algorithm of \citep{hoff2009}, preceding any observation is
%Let $\cY = \{Y_1,\cdots, Y_r\}$  be the sequence of $r \ge 0$ rejected proposals preceding an observation.
From Section \ref{sec:prior_sim}, the joint probability is %of the entire set is %of accepted and rejected proposals is
\begin{align}
  p(\{X_i, \cY_i\}\mid G,\bkappa)    %&  = \left( \prod_{i=1}^{|\cY|}\frac{\etr(\bkappa G^TY_i)}{D(Y_i, G, \bkappa)} \left(1 - \frac{D(Y_i, G, \bkappa)}{D(\bkappa)} \right) \right)
               %\frac{\etr(\bkappa G^TX)}{D(X, G, \bkappa)}\frac{D({X}, G, \bkappa)}{D(\bkappa)} \nonumber \\
%        &= \left( \prod_{i=1}^{|\cY|}\frac{\etr(\bkappa G^TY_i)}{D(Y_i, G, \bkappa)} \left(1 - \frac{D(Y_i, G, \bkappa)}{D(\bkappa)} \right) \right)
%              \frac{\etr(\bkappa G^T{X})}{D(\bkappa)}  \\
         &= \frac{ \etr\left\{\bkappa G^T\left(S+\sum_{j=1}^{|\cY_i|}  Y_{ij}\right ) \right\} }{D(\bkappa)^{1+|\cY|}}
             \prod_{i=1}^n   \prod_{j=1}^{|\cY|} \frac{ \left\{D(\bkappa) -  D(Y_{ij}, G, \bkappa)\right\}}{D(Y_{ij}, G, \bkappa)}.  \label{eq:rej_joint1}
\end{align}
All terms in the expression above can be evaluated easily, allowing %us to calculate the acceptance probability of 
a simple Metropolis-Hastings algorithm in this augmented space.
In fact, we can go further, calculating gradients to run a Hamiltonian Monte Carlo algorithm
\citep{Neal2010} that makes significantly more efficient proposals than a random-walk sampling algorithm. %We demonstrate this below.
In particular,
%Given $n$ observations $X_1,\cdots,X_n$ with latent variable sets $\cY_1, \cdots, \cY_n$, 
let $N = n + \sum_{i=1}^n |\mathcal{Y}_i|$, and
$S = \sum_{i=1}^n(X_i + \sum_{j=1}^{|\mathcal{Y}_i|} Y_{ij})$. The log joint probability $L \equiv \log\left\{p(\{X_i,\cY_i\})\right\}$ is
\begin{align}
 L &= \text{trace}(G^T \bkappa S) +\sum_{i=1}^n \sum_{j=1}^{|\cY_i|}\left[ \log \left\{D(\bkappa) - D(Y_{ij}, \bkappa) \right\}\right. 
                        - \left. \log\{D(Y_{ij}, \bkappa)\}\right] - n \log\left\{D(\bkappa)\right\}. \nonumber
%\left(\sum_{i=1}^n X_i + \sum_{j=1}^{|\cY_i|}Y_{i,j} \right)
\end{align}
Writing $D(Y, \bkappa) =
 \left\{C\prod_{r=1}^p \frac{ I_{(d-r-1)/2}(\| \kappa_r N^T_r G_r \|)}{ \|\kappa_r N^T_r G_r \|^{(d-r-1)/2 }} \right\}$ as $C\tD(Y, \bkappa)
$, Appendix \ref{sec:gradient} shows that %and that $\frac{\dif }{\dif x}(x^{-m} I_m(x)) = x^{-m}I_{m+1}(x)$, we have
\begin{align}
%\quad  \frac{\dif \tD(Y, \bkappa)}{\dif \kappa_j} & = N^T_jG_j \tD(Y,\bkappa) \frac{I_{(d-j+1)/2}}{I_{(d-j-1)/2}}(\kappa_j N^T_jG_j )  \quad \text{and} \\
%\quad  \frac{\dif \tD(\bkappa)}{\dif \kappa_j}    & = \tD(\bkappa) \frac{I_{(d-j+1)/2}}{I_{(d-j-1)/2}}(\kappa_j).
%\intertext{Then, writing $L = \log(p(\{X_i,\cY_i\}))$, we have}
\frac{\dif L }{\dif \kappa_k} %&= G_{[,k]}^T S_{[,k]} +\sum_{i=1}^n \sum_{j=1}^{|\cY_i|}\left(\frac{\tD'(\bkappa) - \tD'(Y_{i,j}, \bkappa) }{\tD(\bkappa) - \tD(Y_{i,j}, \bkappa)} -
                             %\frac{\tD'(Y_{i,j}, \bkappa) }{\tD(Y_{i,j}, \bkappa) }  \right) - N\frac{\tD'(\bkappa)}{\tD(\bkappa)} \\
%   &=  G_{[,k]}^T S_{[,k]} +\sum_{i=1}^n \sum_{j=1}^{|\cY_i|}\left(\frac{\tD'(\bkappa)\tD(Y_{i,j},\bkappa) - \tD(\bkappa)
%                            \tD'(Y_{i,j}, \bkappa) }{\tD(Y_{i,j}, \bkappa) (\tD(\bkappa) - \tD(Y_{i,j}, \bkappa))}  \right)
%              - N  \frac{I_{(d-j+1)/2}}{I_{(d-j-1)/2}}(\kappa_k)  \\
%   &=  G_{[,k]}^T S_{[,k]} +\sum_{i=1}^n \sum_{j=1}^{|\cY_i|}\left(\frac{\tD'(\bkappa) - \tD(\bkappa)N^T_kG_k \frac{I_{(d-k+1)/2}}{I_{(d-k-1)/2}}(\kappa_k N^T_kG_k ) }
%                            { (\tD(\bkappa) - \tD(Y_{i,j}, \bkappa))}  \right)  \nonumber \\
%       & \qquad       - N  \frac{I_{(d-j+1)/2}}{I_{(d-j-1)/2}}(\kappa_k)  \\
    &=  G_{[,k]}^T S_{[,k]} +\sum_{i=1}^n \sum_{j=1}^{|\cY_i|}\left[\frac{ \frac{I_{(d-k+1)/2}}{I_{(d-k-1)/2}}(\kappa_k)  - N^T_kG_k \frac{I_{(d-k+1)/2}}{I_{(d-k-1)/2}}(\kappa_k N^T_kG_k ) }
                             { \left\{1 - \frac{\tD(Y_{ij}, \bkappa)}{\tD(\bkappa)}\right\}}  \right]
         - N  \frac{I_{(d-k+1)/2}}{I_{(d-k-1)/2}}(\kappa_k). \nonumber
\end{align}

We use this gradient to construct a Hamiltonian Monte Carlo sampler \citep{Neal2010} for $\bkappa$. %that explores $\bkappa$-space.
%more efficiently that the random walk Metropolis-Hastings algorithm. 
%The resulting Hamiltonian Monte Carlo sampler is fairly standard, and we
%refer the reader to \cite{Neal2010} for details. 
Here, it suffices to note that a proposal involves taking $L$ 
`leapfrog' steps of size $\epsilon$ along the gradient, and accepting the resulting state with probability proportional to the product of equation
\eqref{eq:rej_joint1}, and a simple Gaussian `momentum' term. The acceptance probability depends on how well the
$\epsilon$-discretization approximates the continuous dynamics of the system, and choosing a small $\epsilon$ and a large $L$ can give global moves with high
acceptance probability. A large $L$ however comes at the cost of a large number of gradient evaluations. We study this trade-off
in Section \ref{sec:Bayes_expt}. %, before performing a Bayesian analysis of  some problems with the Matrix Langevin distribution.

\subsection{Vectorcardiogram dataset}  \label{sec:vcg_np}

The vectorcardiogram is a loop traced by the {cardiac vector} during a cycle of the heart beat. The two directions of orientation of this
loop in three-dimensions form a point on the Stiefel manifold. The dataset of \cite{vecdata} includes $98$ such recordings,
and is displayed in the left subplot of Figure \ref{fig:vcg_data}. We represent each observation with a pair of orthonormal vectors, with the set of 
cyan lines to the right forming
the first component. This empirical distribution 
possesses a single mode, so that
the matrix Langevin distribution appears a suitable model.

  \begin{figure}
  \centering
  \begin{minipage}{0.41\textwidth}
    \includegraphics[width=.9\textwidth]{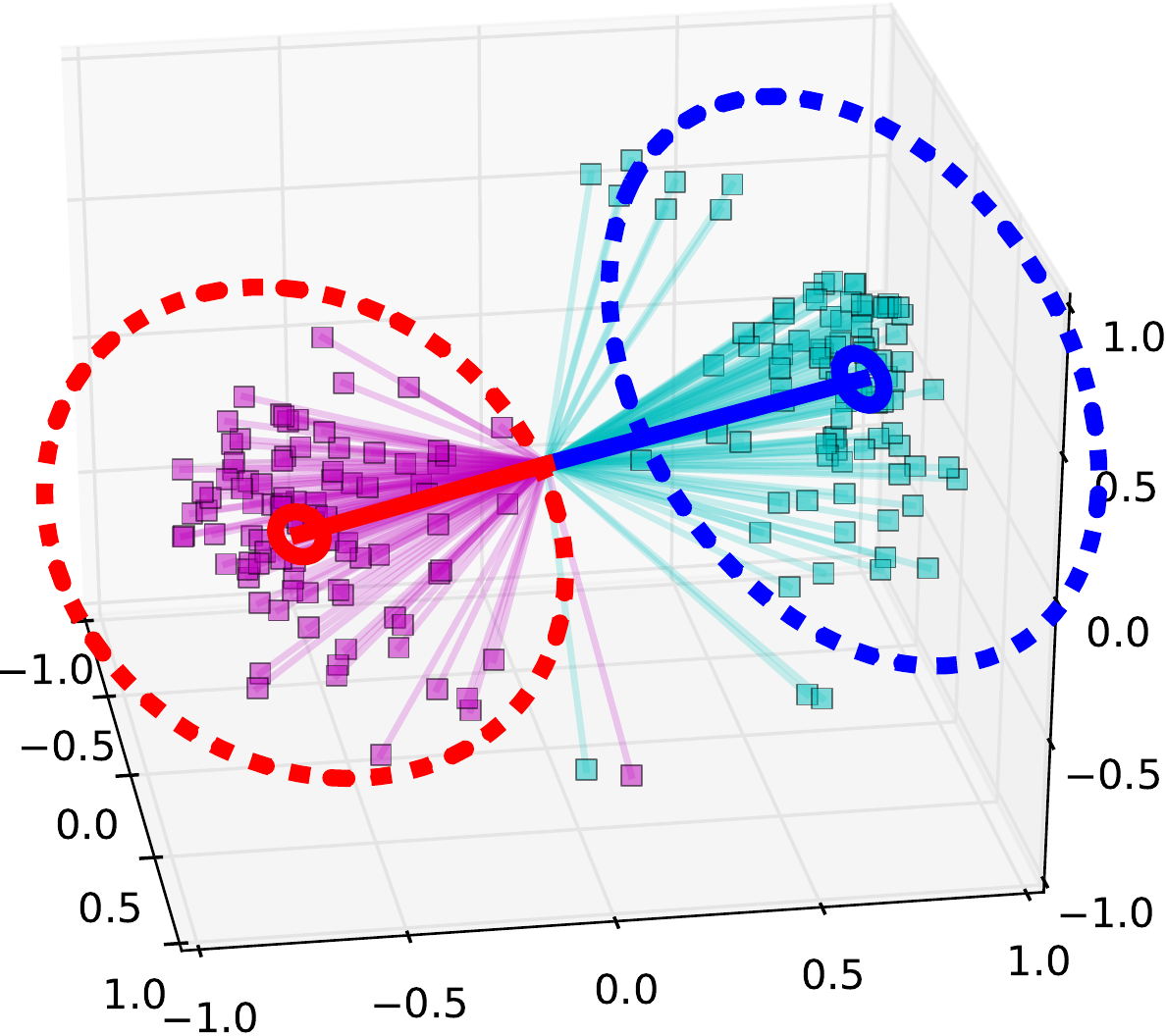}
  \end{minipage}
  \begin{minipage}{0.31\textwidth}
  \centering
    \includegraphics[width=.7\textwidth]{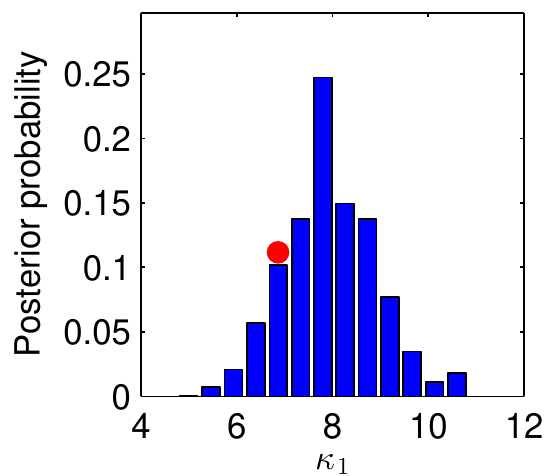}
  \centering
    \includegraphics[width=.7\textwidth]{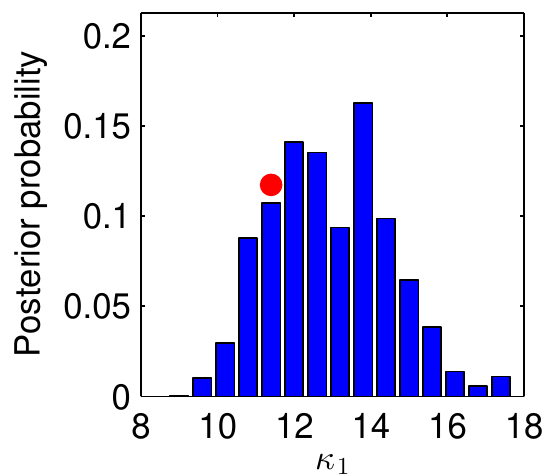}
  \end{minipage}
\caption{(Left) Vector cardiogram dataset with inferences. Bold lines are maximum likelihood estimates of $G$, and solid circles
 contain $90\%$ posterior mass. Dashed circles are $90\%$ predictive probability regions.
 (Right) Posterior over $\kappa_1$ and $\kappa_2$, circles are maximum likelihood estimates.}
  \label{fig:vcg_data}
  \end{figure}

We place weak independent exponential priors with mean $10$ and variance $100$ on the scale parameter $\bkappa$, and a uniform prior on the location
parameter $G$. We restrict $H$ to be the identity matrix. Inferences were carried out using the Hamiltonian sampler to produce 10,000 samples, with a
burn-in period of 1,000. For the leapfrog dynamics, we set a step size of $0.3$, with the number of steps equal to $5$.
We fix the `mass parameter' to the identity matrix as is typical.
We implemented all algorithms in R, building on
code from the $\mathtt{rstiefel}$ package of Peter Hoff. All
simulations were run on an Intel Core 2 Duo 3 Ghz CPU.
%The right plot in Figure \ref{fig:vcg_param1} shows the posterior distribution over the first component of $\bkappa$, which is centered around $1.75$
%(the second component is similar).
For comparison, we include the maximum likelihood estimates of $\bkappa$ and $G$. %; see the appendix for more details on this.
For  $\kappa_1$ and $\kappa_2$, these were $11.9$ and $5.9$, and we plot these in the right half of Figure \ref{fig:vcg_data}
as the red circles. 

%The blue bars show the
%Bayesian posteriors over the components of $G$.
The bold straight lines in Figure \ref{fig:vcg_data} (left) show the maximum likelihood estimates of the components of $G$, with the small
circles corresponding to $90\%$ Bayesian credible regions
estimated from the Monte Carlo output.
The dashed circles correspond to $90\%$ predictive probability regions for the Bayesian model. For these, we generated  $50$ points on $V_{3,2}$ for 
each sample, with parameters specified by that sample. The dashed circles contain $90\%$ of these points across all samples.
Figure \ref{fig:vcg_data} (right) show the posterior over $\kappa_1$ and $\kappa_2$.

\subsection{Comparison of exact samplers} \label{sec:Bayes_expt}

%Here, we compare our sampler with the exchange sampler
%for the matrix Langevin distribution.
%the exchange sampler and our sampler based on the rejection sampler underlying the Matrix Langevin distribution.
To quantify sampler efficiency, % we explore parameter space, 
we estimate the
effective sample sizes produced per unit time. This corrects for
correlation between successive Markov chain samples
by estimating the number of independent samples produced; for this
we used the $\mathtt{rcoda}$ package of  \cite{Rcoda2006}.
%We evaluate these on the vectorcardiogram dataset.

  \begin{figure}
  \centering
    \includegraphics[width=.26\textwidth]{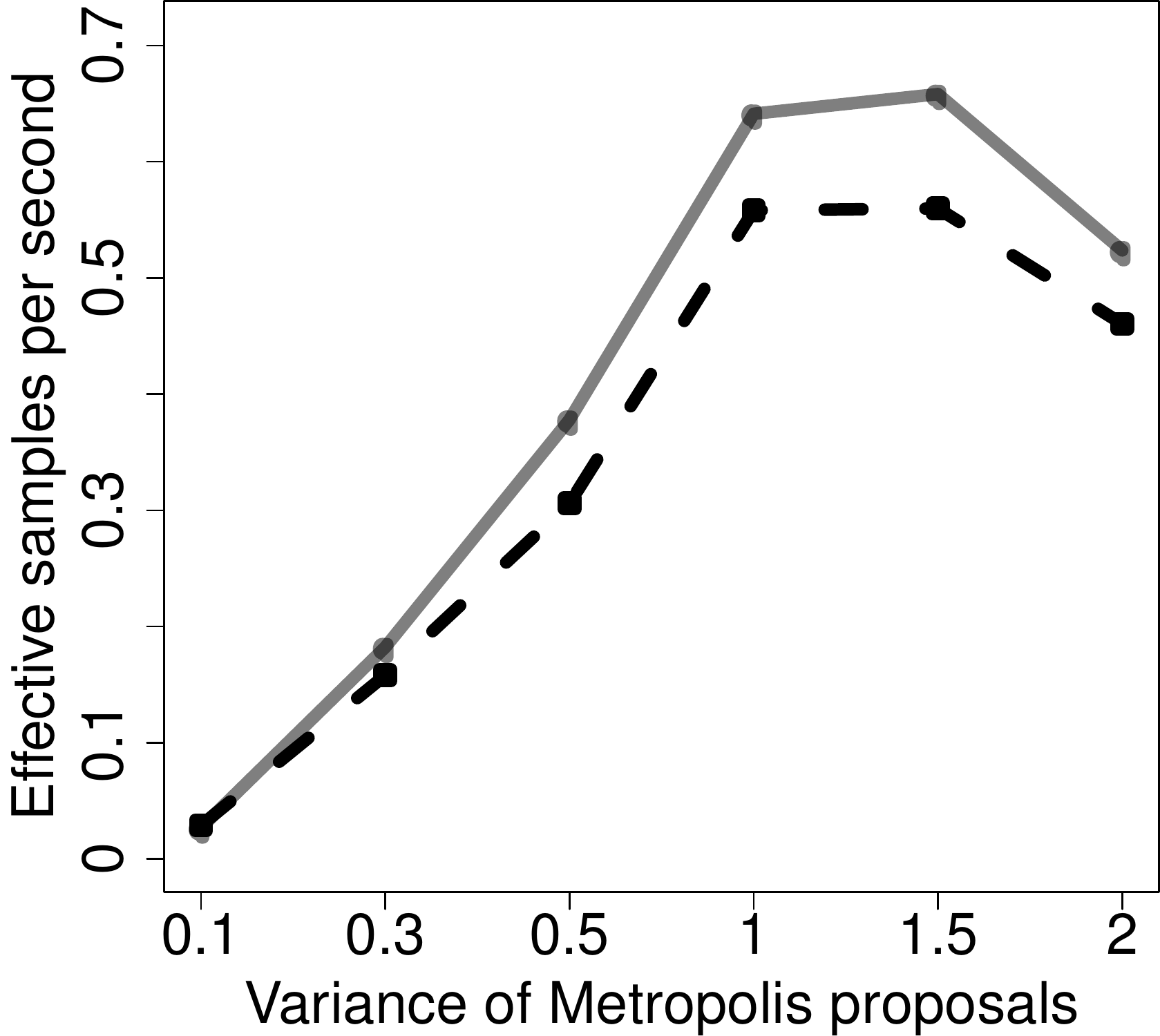}
  \centering
    \includegraphics[width=.26\textwidth]{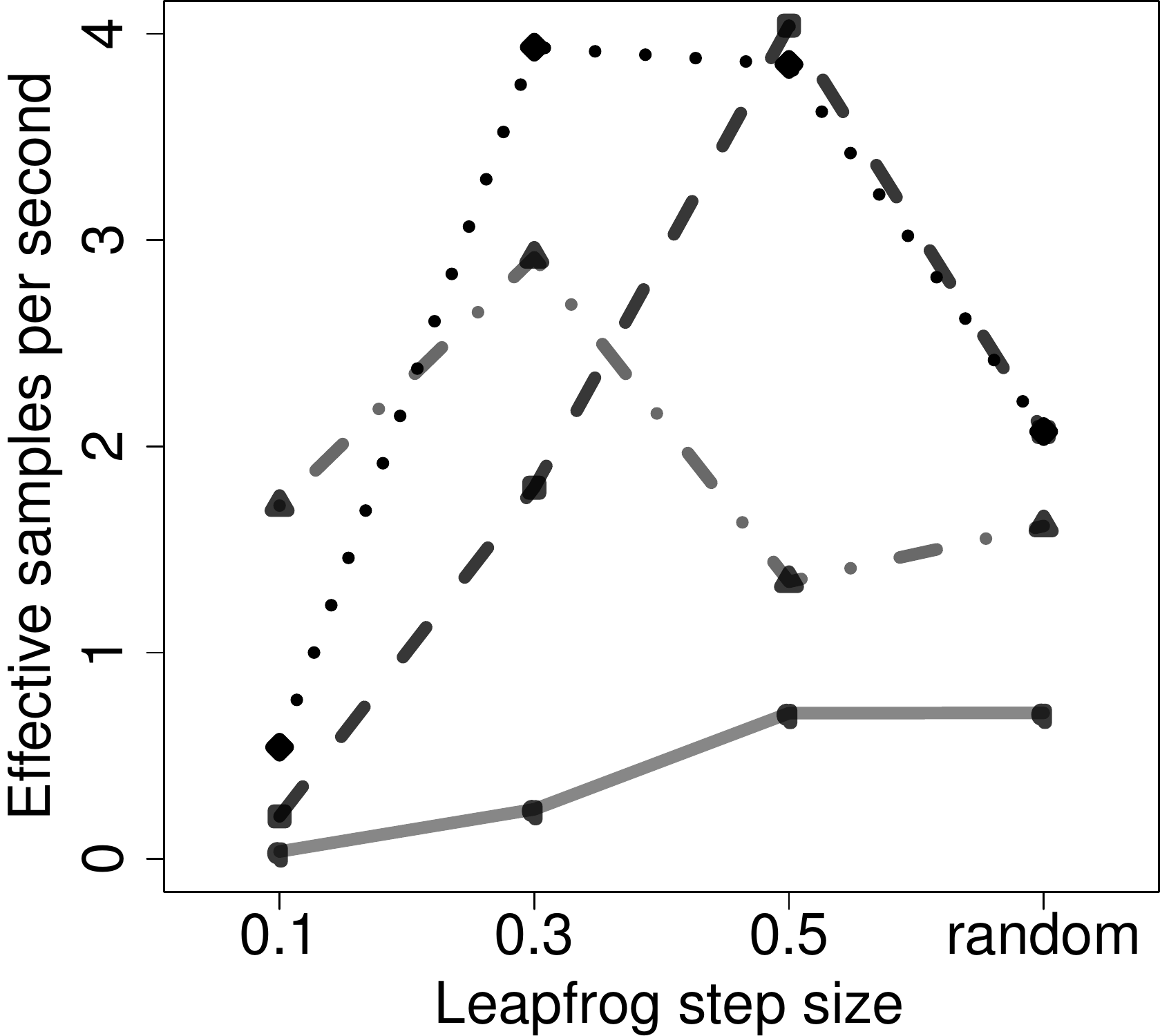}
\caption{Effective samples per second for (left) random walk and (right) Hamiltonian samplers. From bottom to top at abscissa $0.5$: (left) 
 Metropolis-Hastings data-augmentation sampler and exchange sampler, and (right) 1/10/5/3 leapfrog steps of Hamiltonian sampler.}
  \label{fig:samplers_comp}
  \end{figure}

The left plot in Figure \ref{fig:samplers_comp} considers two
Metropolis-Hastings samplers, the exchange sampler and our latent variable sampler on the vectorcardiogram dataset.
% based on
%the rejection sampler of \cite{hoff2009}. % underlying the Matrix Langevin distribution.
Both samplers
perform a random walk in the $\bkappa$-space, with the steps drawn for a normal
distribution whose variance increases along the horizontal axis.
The vertical axis shows the
median effective sample size per second for the components of $\bkappa$.
The figure shows that both samplers' performance
peaks when the proposals have a variance between $1$ and $1.5$, with the
exchange sampler performing slightly better. However, 
the real advantage of our sampler is that introducing the latent variables
results in a joint distribution without any intractable terms,
allowing the use of more sophisticated sampling algorithms. The plot to the right studies the Hamiltonian
Monte Carlo sampler described at the end of Section \ref{sec:latent_hist}. Here we
vary the size of the leapfrog steps along the horizontal axis, with the different
curves corresponding to different numbers of leapfrog steps.
We see that this performs an order of magnitude better than either of the previous
algorithms, with performance peaking with $3$ to $5$ steps of size $0.3$ to
$0.5$, fairly typical values for this algorithm. This shows the advantage of exploiting
gradient information in exploring the parameter space.

\subsection{Comparison with an approximate sampler}
In this section, we consider an approximate sampler based on an asymptotic approximation to $Z(\bkappa)= \mathstrut_0F_1(\frac{1}{2}d, \frac{1}{4}\bkappa^T\bkappa)$ for large values of
$(\kappa_1, \cdots, \kappa_n)$ \citep{khatri1977}:
\begin{align}
  Z(\bkappa)  \simeq &
         \left\{ \frac{2^{-\frac{1}{4}p(p+5) + \frac{1}{2}pd}}{\pi^{\frac{1}{2}p}} \right\} \etr(\bkappa) \prod_{j=1}^p \Gamma\left(\frac{d-j+1}{2} \right)
   \left[ \left\{\prod_{j=2}^p \prod_{i=1}^{j-1}(\kappa_i + \kappa_j)^{\frac{1}{2}} \right\} \prod_{i=1}^p \kappa_i^{\frac{1}{2}(d-p)} \right]^{-1}. \nonumber
\end{align}
%An expansion, similar in spirit, but for the matrix Bingham distribution was used by \cite{hoff2009jrssb}. 
We use this approximation in the acceptance probability
of a Metropolis-Hastings algorithm; it can similarly be used to construct a Hamiltonian sampler. 
For a more complicated but accurate approximation, see \cite{Kume:2013:SAN}. In general however, using
such approximate schemes involves the ratio of two approximations, and can
have very unpredictable performance. %Below, we study the behaviour of this approximation along with the performance of our exact sampler.
  \begin{figure}
  \centering
    \includegraphics[width=.25\textwidth]{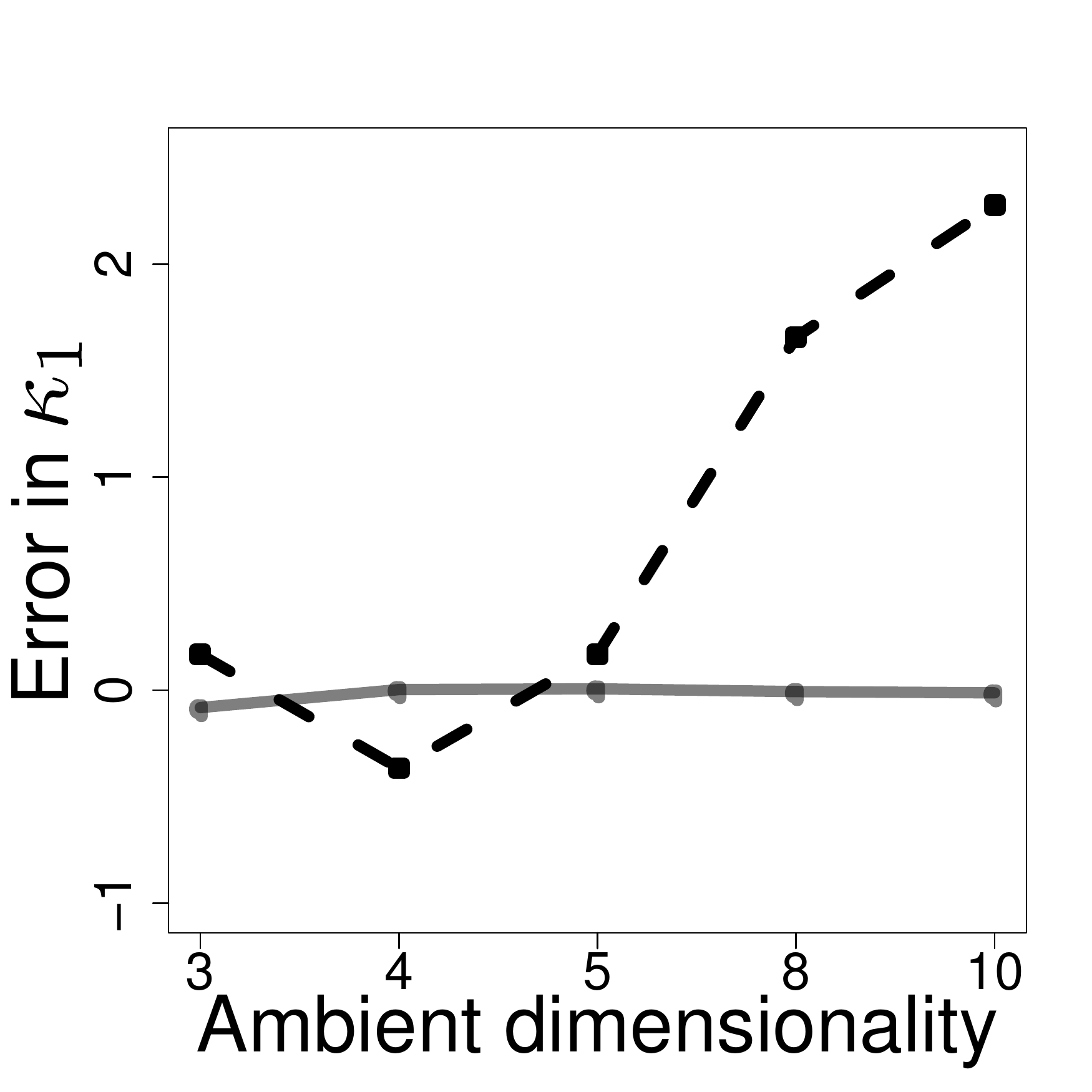}
  \centering
    \includegraphics[width=.25\textwidth]{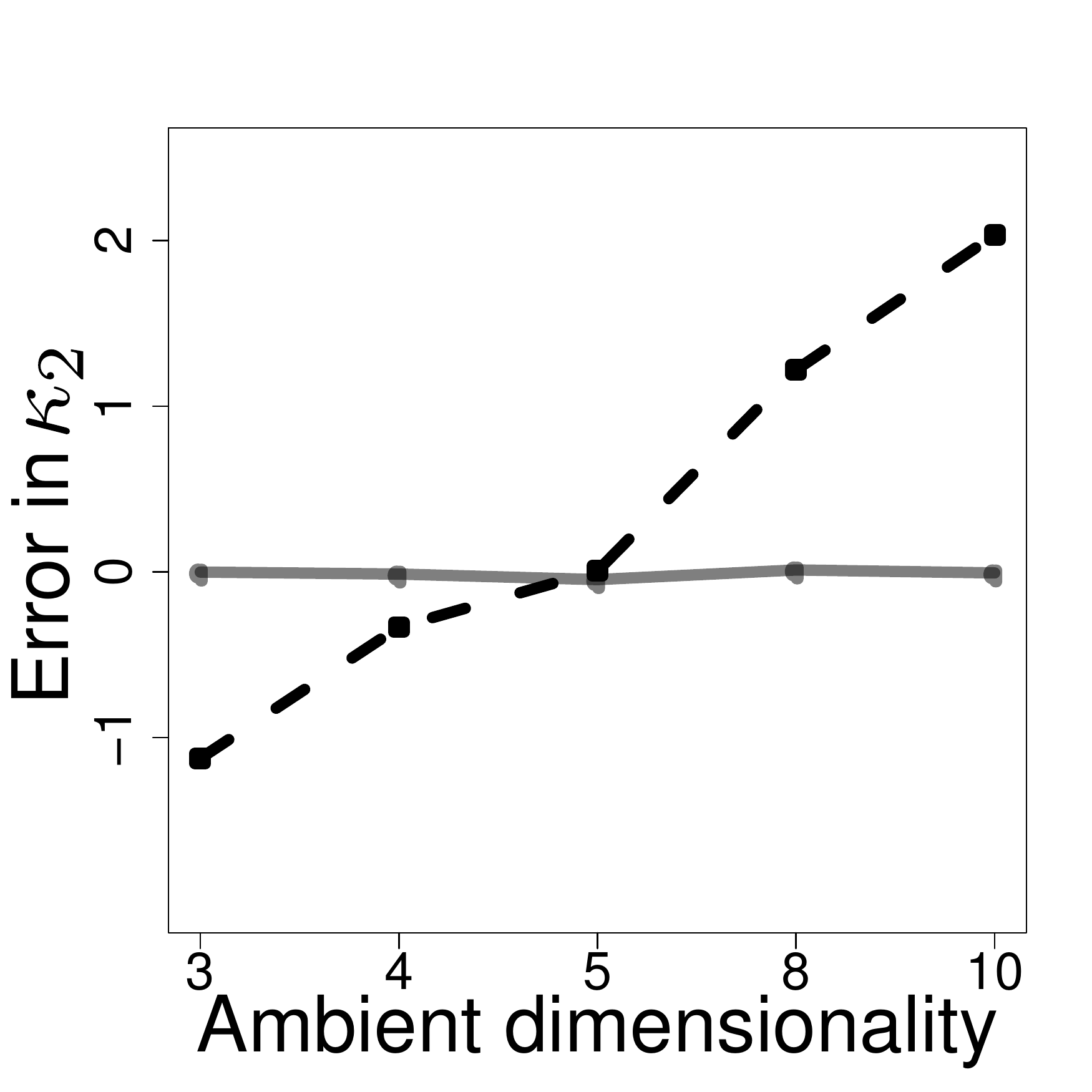}
  \centering
    \includegraphics[width=.25\textwidth]{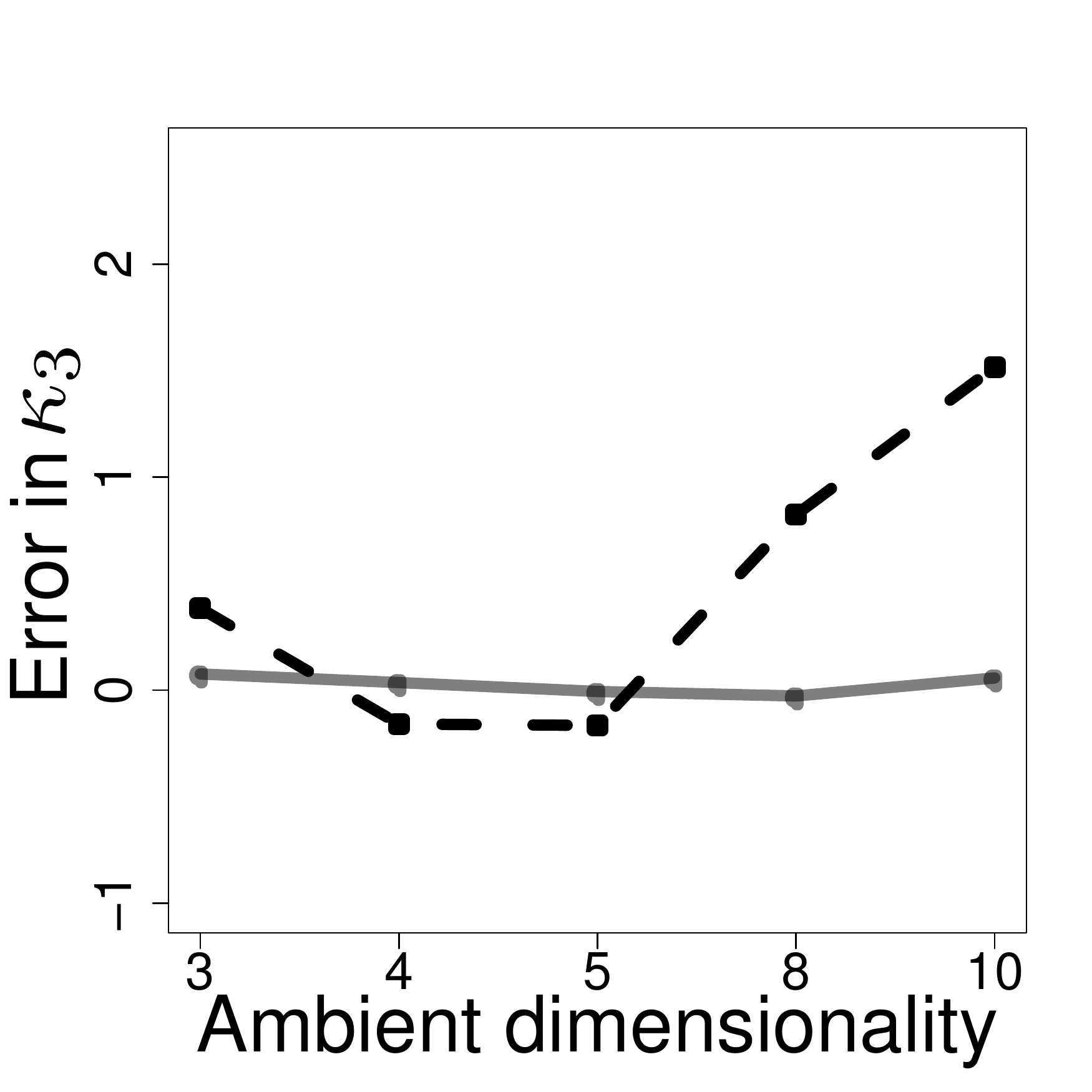}
\caption{Errors in the posterior mean. Solid/dashed lines are Hamiltonian/approximate sampler.}
  \label{fig:approx_comp}
  \end{figure}

On the vectorcardiogram dataset, the approximate sampler 
is about forty times faster than the exact samplers. For larger
datasets, this difference will be even greater, and the real question is how accurate
the approximation is. Our exact sampler allows us to study this: %we consider different values of $\bkappa$ and 
%different dimensions of the ambient space. In particular, 
we consider the Stiefel manifold $V_{d,3}$,
with the three diagonal elements of $\bkappa$ set to $1, 5 $ and $10$. With this setting of $\bkappa$, and
a random $G$, we generate datasets with $50$ observations with $d$ taking values $3, 4, 5, 8,$ and $10$.
In each case, we estimate the posterior mean of $\bkappa$ by running the exchange sampler, and treat this as the
truth. We compare this with posterior means returned by our Hamiltonian sampler, as well as the approximate sampler.  Figure \ref{fig:approx_comp}
shows these results, with the three subplots corresponding to the three components of $\bkappa$, and ambient dimensionality
$d$ increasing along the horizontal axis.
As expected, the two exact samplers agree, and the Hamiltonian sampler has almost no `error'.
The approximate sampler is more complicated. For values of $d$ around $5$, its estimated
posterior mean is close to that of the exact samplers. Smaller values lead to an approximate posterior mean that underestimates the actual posterior mean,
while in higher dimensions, the opposite occurs. Recalling that $\bkappa$ controls the concentration of the matrix Langevin distribution about its
mode, this implies that in high dimensions, the approximate sampler underestimates uncertainty in the distribution of future observations.

%MLE:   68.28 + 7.29
%Bayes:  80.01 + 5.26

\section{The Gaussian process density sampler}\label{sec:gpds}
Our next application is the Gaussian process density sampler of \cite{adams_gpds},
%This finds application in Bayesian density modeling applications, where one wishes to place 
a nonparametric prior for probability densities induced by a logistic transformation of a random function from a Gaussian process.  
Letting $\sigma(\cdot)$ denote the logistic function, the random density is 
%$$p(x) \propto g(x) \sigma\{ f(x) \},\quad 
%f \sim \mbox{\small{GP}}, $$
\begin{align}
g(x) &\propto g_0(x) \sigma\{f(x)\},\qquad
f \sim  \mbox{\small{GP}}, \nonumber 
\end{align}
with $g_0(\cdot)$ a parametric base density and $\mbox{\small{GP}}$ denoting a Gaussian process.
%a nonparametric prior on the unknown
%probability density underlying a set of observations. The model defines a random probability density on some space $\mathbb{X}$ by transforming a random function
%drawn from a Gaussian process with a %link function.  While it is common to use the exponential link function, here we use the 
%logistic function $\sigma(x) = {\exp(x)}/\{1+\exp(x)\}$. 
%The transformed Gaussian process allows flexible perturbations around a parametric base-probability density $g(\cdot)$ which centers the model. 
%Writing the random function as $f$, the density $p(x)$ is: % is proportional to $g(x) \sigma\{f(x)\}$:
%\begin{align}
%p(x) &\propto g(x) \sigma\{f(x)\},\qquad
%f \sim \textsc{GP}. \nonumber 
%\end{align}
%Na\"{\i}vely, drawing samples from this density might appear to require evaluating the normalization constant $\int_{\mathbb{X}} g(x) \sigma(f(x)) \mathrm{d}x$, an integral 
%of a random function.
%Indeed, for the exponential link function this is the case, and this integral is approximated by a finite summation, with $f$ evaluated on a 
%discrete grid. 
%However, %the fact that the sigmoid is bounded above by $1$ gives us 
The inequality
%\begin{align}
$g_0(x) \sigma\{f(x)\} \le g_0(x)$
%\end{align}
allows a rejection sampling algorithm %where we draw perfect samples by successively 
by making proposals from $g_0(\cdot)$.
%Rather than having to evaluate the random function $f$ at an uncountable number of locations (as Equation \eqref{eq:} might suggest), we only 
%evaluate the GP at the location of these proposals. 
%can do this sequentially, sampling a new 
At a proposed location $x^*$, we sample the function value $f(x^*)$ conditioning on 
all previous evaluations, and accept the proposal with probability 
$\sigma\{f(x^*)\}$.  Such a scheme involves no approximation error, and only requires evaluating the random function on a finite set of points.
Algorithm \ref{alg:gpds} describes the steps involved in generating $n$ observations.

{
\newcommand{\WP}[1]{\STATE \textbf{With probability} (#1)}

\begin{algorithm}
\caption{Generate $n$ new samples from the Gaussian process density sampler}\label{alg:gpds}
\begin{tabular}{p{1.4cm}p{12.2cm}}
\textbf{Input:}  & A base probability density $g_0(\cdot)$, \\
                 & Previous accepted and rejected proposals $\tilde{X}$ and $\tilde{Y}$, \\
                 & Gaussian process evaluations $f_{\tilde{X}}$ and $f_{\tilde{Y}}$ at these locations. \\
\textbf{Output:} & $n$ new samples $X$, with the associated rejected proposals $Y$, \\ % from the random density $p(\cdot)$.  \\
                 & Gaussian process evaluations $f_X$ and $f_Y$ at these locations. \\
\hline
\end{tabular}
\begin{algorithmic}[1]
  \REPEAT
  \STATE Sample a proposal $y$ from $g_0(\cdot)$.
  \STATE Sample $f_y$, the Gaussian process evaluated at $y$, conditioning on $f_X$, $f_Y$, $f_{\tilde{X}}$ and $f_{\tilde{Y}}$. % at $X$ and $Y$  %, and the covariance kernel $\mathcal{K}$.
  \STATE \textbf{with probability }{$\sigma(f_y)$}: \quad Accept $y$ and add it to $X$. Add $f_y$ to $f_X$.
  \STATE $\quad$ \textbf{else}:\qquad \qquad \qquad  \qquad Reject $y$ and add it to $Y$. Add $f_y$ to $f_Y$.
  \UNTIL{$n$ samples are accepted.}
\end{algorithmic}
\end{algorithm}
}

\subsection{Posterior inference}
% The GPDS defines a nonparametric prior over probability densities on $\mathbb{X}$. 
Given observations $X = \{x_1, \cdots, x_n\}$,
we are interested in $p(g\mid X)$, the posterior over the underlying density. %, or equivalently, % $p(\cdot)$. 
Since $g$ is determined by the modulating function $f$, we focus on %there is a one-to-one mapping between the density and the 
%function $f(\cdot)$, we focus on 
$p(f\mid X)$. %By itself, this is a complicated quantity, and 
While this quantity is doubly intractable,
after augmenting the state space to include the proposals $\cY$ from the rejection sampling algorithm, %of the previous section results in a much simpler probability distribution. 
%In particular, 
 $p(f\mid X,\cY)$ has density with respect to the Gaussian process prior given by
$\prod_{i=1}^n \sigma\left\{f(x_i)\right\} \prod_{i=1}^{|\cY|} \left[1-\sigma\left\{f(y_i)\right\}\right]$, see also \cite{adams_gpds}.
In words, the posterior over $f$ evaluated at $X \cup \cY$ is just the posterior from a Gaussian process
classification problem with a logistic link-function, and with the accepted and rejected proposals corresponding to the two classes.
%x_i &\sim p(x) \qquad i = 1, \cdots, n
There are a number of Markov chain Monte Carlo methods such as Hamiltonian Monte Carlo or elliptical slice sampling 
\citep{murray2010} that are applicable in such a situation. Given $f$ on $X \cup \cY$, it can be evaluated anywhere else by conditionally sampling from a multivariate normal.

%The question then is how to 
Sampling the rejected proposals $\cY$ given $X$ and $f$ is straightforward by Algorithm \ref{prop:rej_post}:
run the rejection sampler until $n$ accepts, and treat the rejected proposals generated
along the way as $\cY$. In practice, we do not have access to the entire function $f$, only its
values evaluated on $X$ and $\cY_{old}$, the location of the previous thinned variables. However, just as under the generative 
mechanism, we can retrospectively evaluate the function $f$ where needed.
After proposing from $g_0(\cdot)$, we sample the value of the function at this location conditioned on all previous evaluations, and
use this value to decide whether to accept or reject. We outline the inference algorithm in
Algorithm \ref{alg:gpds_mcmc}, noting that it is much simpler than that proposed in \cite{adams_gpds}.
We also refer to that paper for limitations of the exchange sampler in this problem.
{
\begin{algorithm}
\caption{A Markov chain iteration for inference in the Gaussian process density sampler}\label{alg:gpds_mcmc}
\begin{tabular}{p{1.4cm}p{12.2cm}}
\textbf{Input:}  & Observations $X$ with corresponding function evaluations $\tilde{f}_X$, \\
                 & Current rejected proposals $\tilde{Y}$ with corresponding function evaluations $\tilde{f}_{\tilde{Y}}$. \\
\textbf{Output:} & New rejected proposals $Y$, \\ % from the random density $p(\cdot)$.  \\
                 & New Gaussian process evaluations $f_X$ and $f_Y$ at $X$ and $Y$, \\
                 & New hyperparameters. \\
\hline
\end{tabular}
\begin{algorithmic}[1]
  \STATE Run Algorithm \ref{alg:gpds} to produce $|X|$ accepted samples, with $X, \tilde{Y}, \tilde{f}_X$ and $\tilde{f}_{\tilde{Y}}$ as inputs.
  \STATE Replace $\tilde{Y}$ and $f_{\tilde{Y}}$ with values returned by the previous step; call these $Y$ and $\hat{f}_Y$.
  \STATE Update $\tilde{f}_X$ and $\hat{f}_Y$ using for example, hybrid Monte Carlo, to get $f_X$ and $f_Y$.
  \STATE Update Gaussian process and base-distribution hyperparameters.
\end{algorithmic}
\end{algorithm}
}

\subsection{Experiments}   \label{sec:gpds_expt}

  \begin{figure}
  \centering
    \includegraphics[width=.35\textwidth]{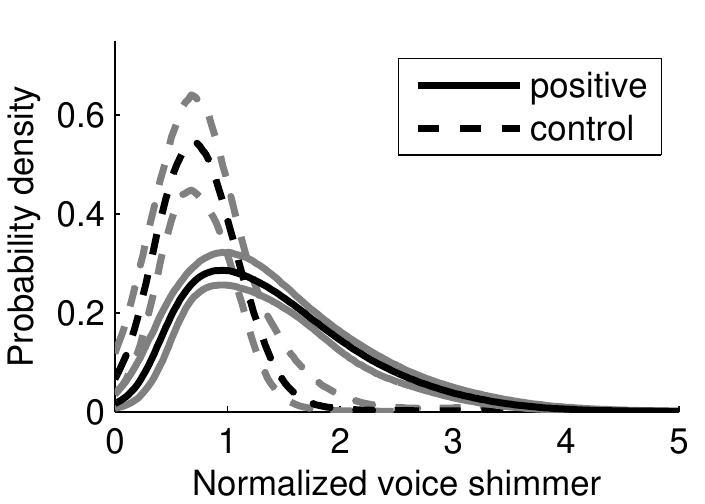}
    \includegraphics[width=.35\textwidth]{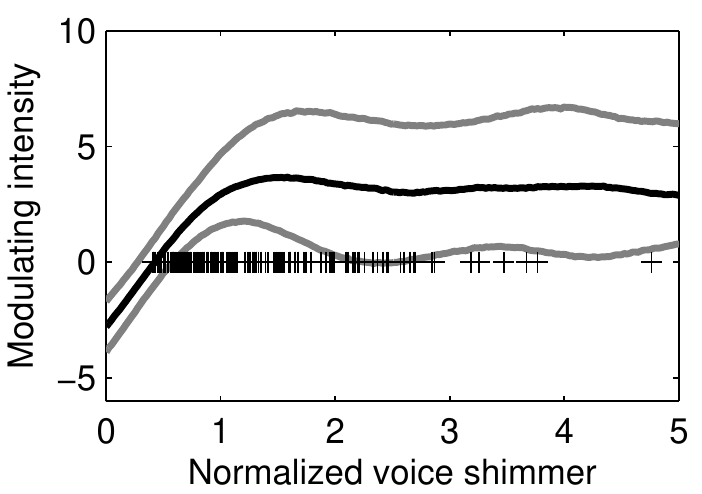}
\caption{(Left) Posterior density for positive and control groups, (Right) Posterior over the Gaussian process function for
positive group with observations. Both plots show the median with $80$ percent posterior credible intervals.}
  \label{fig:plot_glx}
  \end{figure}

  Voice changes are a symptom and measure of onset of Parkinson's disease, and one attribute is voice shimmer, a measure of variation in
  amplitude. We consider a dataset of such measurements for subjects with and without the disease \citep{little07}, with
%Voice-shimmer is a measure of variation in amplitude, and there are 
$147$ measurements with, and $48$ without the disease.
We normalized these to vary from $0$ to $5$, and  used the model of \cite{adams_gpds} as a prior on the underlying probability densities. 
We set $g_0(\cdot)$ to a normal $\mathcal{N}(\mu,\sigma^2)$, 
with a normal-inverse-Gamma prior on $(\mu, \sigma)$. The latter had parameters $(0,.1,1,10)$. The Gaussian process had a 
squared-exponential kernel, with variance and length-scale of $1$.
For each case, we ran a Matlab implementation of our data augmentation algorithm to produce 2,000 posterior samples after a burn-in of $500$ samples. 

  \begin{figure}
  \centering
    \includegraphics[width=.35\textwidth]{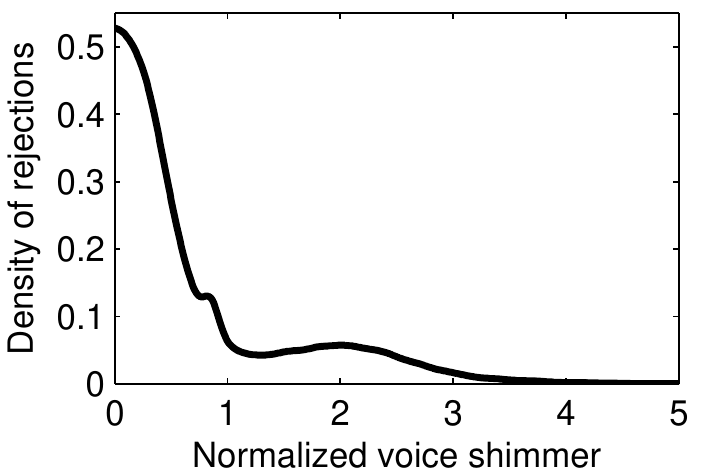}
    \includegraphics[width=.35\textwidth]{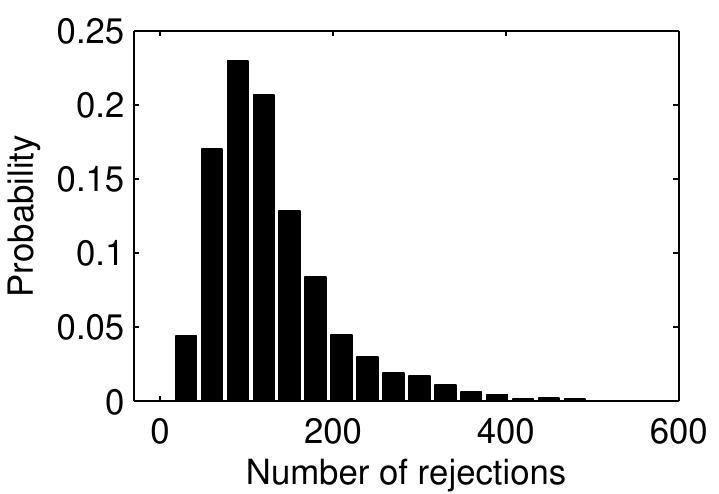}
\caption{(Left) Kernel density estimate of locations of rejected proposals and (Right) Histogram of number of rejected proposals for positive group.}
  \label{fig:plot_glx2}
  \end{figure}

  Figure~\ref{fig:plot_glx} (left) shows the resulting posterior over densities, corresponding to $\theta$ in Algorithm~\ref{alg:rej_post}.
The control group is fairly Gaussian, while the disease group is skewed to the right.
%This is clearly multimodal and non-Gaussian, with the modulating function allowing low-probability areas around either side of the main cluster
%of observations. The right bump away from the origin is slightly thinner that the left one because we have fewer observations there, but also
%since we centered our model around the origin.
The right plot in the same figure focuses on the deviation from normality by plotting the posterior over the latent function $f$. 
%Outside the interval $[0,10]$, this reverts to the prior, with the decaying tails of the base-measure $g_0(\cdot)$ accounting for the lack of observations there. 
%Inside the interval $[0,10]$, there are two dips in the Gaussian process intensity at $[1,3]$
%and $[7,9]$, while over $[3,7]$, it 
We see that to the right of $0.5$, it is larger than its prior mean of $0$, implying larger probability than under a Gaussian.
Figure~\ref{fig:plot_glx2} studies the distribution of the rejected proposals $\cY$.
The left plot shows the distribution of their locations:
most of these occured near the origin. Here, the disease density reverts to Gaussian or even sub-Gaussian, with the intensity function taking small 
values.  The right plot is a histogram of the number of rejected proposals: this is typically 
around $100$ to $150$, though
the largest value we observed was $668$. Since inference over the latent function involves evaluating it at the locations of the accepted as
well as rejected proposals, the largest covariance matrix we had to deal with was about $600 \times 600$; typical values were around $100 \times 100$. Using
the same setup as Section~\ref{sec:Bayes_expt}, it took a na\"ive Matlab implementation $26$ and $18$ minutes to run 2,500 iterations for the
disease and control datasets. One can imagine 
computations becoming unwieldy for a large number of observations, or when there is large mismatch between the true density and the base-measure $g_0(\cdot)$. 
In such situations, one might have to choose the Gaussian process covariance kernel more carefully, use one of many sparse approximation techniques,
or use other nonparametric priors like splines instead. In all these cases, we can use our algorithm to recover the rejected proposals $\cY$, 
and given these, posterior inference over $f$ 
can be carried out using standard techniques.
%, this effect
%vanishes when we place a more uniformative prior on the parameters of $g_0(\cdot)$.

%We end this section by reiterating the simplicity of our approach as opposed to the MCMC algorithm of \cite{adams_gpds}. This involved a number of
%Metropolis-Hastings proposals to perturb the current configuration of the rejected samples. This required the user to specify a number of parameters,
%as well as calculate acceptance probabilities for each of the proposals. By contrast, our algorithm effectly just required us to run the prior
%rejection sampling algorithm to produce conditionally independent draws of the rejected samples.

\section{Discussion}\label{sec:conc}
We described a simple approach to carry out Markov chain Monte Carlo inference when data generation
involves a rejection sampling algorithm. %We considered two problems: Bayesian  inference for 
%parameters of the matrix Langevin distribution, and the Gaussian
%process density sampler. 
Our algorithm is simple and efficient, and allows us to exploit ideas
like Hamiltonian Monte Carlo to carry out efficient inference. While our algorithm is exact, 
it also provides a framework for faster, approximate algorithms.
For instance, the number of rejected proposals preceeding any observation is a random number that
a priori is unbounded. One can bound the computational cost of an iteration by limiting the maximum number
of rejected proposals. Similarly, one might try sharing rejected proposals across observations. We leave
the study of the approximate Markov chain algorithms resulting from such `user impatience' for future research. Also left open is a more careful analysis
of Markov mixing rates for the applications we considered. There are also a number of potential applications that we have
not described here: particularly relevant are rejection samplers for stochastic differential
equations \citep{beskos05,bladt2014} .

%Another interesting direction is to consider extending our ideas to adaptive rejection samplers, where one nonparametrically 
%  This posterior is
%doubly intractable, and we proposed and studied a novel exact MCMC algorithm for posterior sampling. We 
%extended our ideas to the nonparametric case, studying the posterior consistency of the resulting model. In our experiments,
%we applied our ideas to two standard datasets on the Stiefel manifold, besides a few synthetic datasets.
%Our work represents a first step towards developing more 
%complicated hierarchical models with latent components on the Stiefel manifold. 
%We also hope our ideas are useful to the Bayesian analysis of other, more complicated manifolds.
%There is a growing need for such models 
%with the increased availability of high-dimensional data with complex, nonlinear structure. 

\section{Acknowledgement}
This work was supported by the National Institute of Environmental Health Sciences of the National Institute of Health.
%This work was supported by grant R01ES017240 from the National Institute of Environmental Health Sciences of the National Institute of Health.

\appendix
\section{Proofs}

% \begin{proof}[of Proposition \ref{prop:rej_post}]

% The probability density of an accepted sample $x$ is
% $p(x\mid \theta) = {f(x,\theta)}/{Z(\theta)}$.
% {Independently introduce the set $(\cY, \hat{x})$ by running the rejection sampler until acceptance, so that}
% \begin{align}
%   p(x, \cY, \hat{x}\mid \theta) &=\frac{f(x,\theta)}{Z(\theta)}
%                      \left[ \frac{f(\hat{x}, \theta)}{M} \prod_{i=1}^{|\mathcal{Y}|}  \left\{q(y_i\mid \theta) - \frac{f(y_i, \theta)}{M}\right\} \right]  \nonumber \\
%             &=\frac{f(\hat{x},\theta)}{Z(\theta)}
%                      \left[\frac{f({x}, \theta)}{M} \prod_{i=1}^{|\mathcal{Y}|}  \left\{(q(y_i\mid \theta) - \frac{f(y_i, \theta)}{M}\right\} \right]. \nonumber
% \end{align}
% {Marginalizing out $\hat{x}$, we have $
%   p(\cY, x) =  \frac{f(x, \theta)}{M} \prod_{i=1}^{|\mathcal{Y}|}  \left\{q(y_i\mid \theta) - \frac{f(y_i, \theta)}{M} \right\}$}.\\ %\nonumber
% From equation \eqref{eq:rej_jnt}, we see this is the desired joint distribution, proving our scheme is correct.
% \end{proof}

\begin{prf1}[of Proposition~\ref{prop:rej_post}]

  Rejection sampling first proposes from $q(x)$, and then accepts with probability $f(x)/\{Mq(x)\}$. %or rejects them.
  Conceptually, one can first decide whether to accept or reject, and then conditionally sample the location.
  The marginal acceptance probability is $Z(\theta)/M$, the area under $f(\cdot,\theta)$ divided by that under $M q(\cdot\mid\theta)$.
  An accepted sample $x$ is distributed as the target distribution $f(x, \theta)/Z(\theta)$, while rejected samples are distributed as 
  $\frac{Mq(x\mid\theta) - f(x,\theta)}{M-Z(\theta)}$. This two component mixture is just the proposal $q(x)$.
  While this scheme loses the computational benefits that motivate the original algorithm, it shows that the location of an accepted sample is independent
  of the past, and consequently, that the number and locations of rejected samples preceding an accepted sample is independent of the location
  of that sample. Consequently, one can use the rejected samples preceding any other accepted sample.
\end{prf1}

\begin{prf1}[of Theorem~\ref{thrm:conv_rate}]
It follows easily from Bayes' rule that for an observation $X$,
    $$p(\theta|X,\cY) \ge p(\theta|X) \frac{b_f}{B_f} \left(\frac{b_qr}{B_q}\right)^{|\cY|}.$$
 Let the number of observations $|X|$ be $n$. Then,
\begin{align}
  k(\htheta\mid \theta) &= \int_{\bU^n} p(\htheta\mid \cY,X) p(\cY\mid \theta,X) \mathrm{d}\cY \nonumber \\
  & \ge  \left(\frac{b_f}{B_f}\right)^n p(\htheta\mid X) \prod_{i=1}^n \int_{\bU} \beta^{|\cY_i|} p(\cY_i\mid \theta,X)  \mathrm{d}\cY_i \nonumber \\
    & =  \left(\frac{b_f}{B_f}\right)^n p(\htheta\mid X) \prod_{i=1}^n \int_{\bU} \beta^{|\cY_i|}  \frac{Z(\theta)}{M}  \prod_{j=1}^{|\cY_i|}  \left\{q(y_{ji}\mid \theta) - \frac{ f(y_{ji}, \theta)}{M} \right\} \lambda(\mathrm{d}y_{ji}) \nonumber \\
    & =  \left(\frac{b_f Z(\theta)}{B_f M}\right)^n{p(\htheta\mid X)} 
    \prod_{i=1}^n \sum_{|\cY_i| = 0}^{\infty} \beta^{|\cY_i|}  \prod_{j=1}^{|\cY_i|}  \left(1 - \frac{Z(\theta)}{M} \right) \nonumber \\
%    & =  \frac{p(\htheta|X)}{M^n} \prod_{i=1}^n \sum_{|\cY_i| = 0}^{\infty} \beta^{|\cY_i|}  \left(1 - \frac{1}{M} \right)^{|\cY_i|} \nonumber  \\
& =  {p(\htheta\mid X)} \left(\frac{b_f Z(\theta)}{B_f M}\right)^n\prod_{i=1}^n \frac{1}{\tilde{\delta}_{\theta}} \qquad \tilde{\delta} = 1 - \beta(1-Z(\theta)/M) \nonumber \\
    & =   {\delta_{\theta}}{p(\htheta\mid X)} \qquad \frac{1}{\delta_{\theta}^{\frac{1}{n}}} = \frac{B_f}{b_f}\left( \frac{M}{Z(\theta)} - \beta(\frac{M}{Z(\theta)}-1)\right)
           =  \frac{B_f}{b_f}\left( \frac{M}{Z(\theta)} (1 - \beta) - \beta)\right) \nonumber \\ %M(1-\beta) - \beta  \nonumber 
%           & \ge   {{\delta}}{p(\htheta\mid X)} \qquad \frac{1}{\delta^{\frac{1}{n}}} = \frac{B_f}{b_f}\left( \frac{1}{R} - \beta(\frac{1}{R}-1)\right)
%           =  \frac{B_f}{b_f}\left( \frac{1}{R} (1 - \beta) + \beta)\right) \nonumber \\ %M(1-\beta) - \beta  \nonumber 
& \ge   {{\delta}}{p(\htheta\mid X)} \qquad \delta = \left\{\frac{b_f}{B_f\left( \beta + R^{-1}\right)}\right\}^n
\end{align}
Thus $k(\htheta\mid \theta)$ satisfies equation \eqref{eq:unif_erg}, with $ \delta = \left\{ 
\frac{b_f}{B_f\left( \beta + R^{-1}\right)}\right\}^n$, and $h(\htheta)$, the
posterior $p(\htheta\mid X)$.
\end{prf1}
 \label{sec:proofs}

\section{Gradient information}

For $n$ pairs $\{X_i, \cY_i\}$, %1,\ldots,X_n$ and rejections $\cY_1, \ldots, \cY_n$, 
with $N = n + \sum_{i=1}^n |\mathcal{Y}_i|$, and
$S = \sum_{i=1}^n(X_i + \sum_{j=1}^{|\mathcal{Y}_i|} Y_{ij})$, we have %The log joint probability is
\begin{align}
 \log\left\{P(\{X_i,\cY_i\})\right\} &= \text{trace}(G^T \bkappa S) +\sum_{i=1}^n \sum_{j=1}^{|\cY_i|}\left[ \log \left\{D(\bkappa) - D(Y_{ij}, \bkappa) \right\}\right.  %\nonumber \\
                        - \left. \log\left\{D(Y_{ij}, \bkappa)\right\}\right] - N \log\left\{D(\bkappa)\right\}. \nonumber
%\left(\sum_{i=1}^n X_i + \sum_{j=1}^{|\cY_i|}Y_{ij} \right)
\end{align}
Write $D(Y, \bkappa) =
 \left\{C\prod_{r=1}^p \frac{ I_{(d-r-1)/2}(\| \kappa_r N^T_r G_r \|)}{ \|\kappa_r N^T_r G_r \|^{(d-r-1)/2 }} \right\}$ as $C\tD(Y, \bkappa)
$. Since $\frac{\dif }{\dif x}\left\{\frac{I_m(x)}{x^{m}} \right\} = x^{-m}I_{m+1}(x)$, 
\begin{align}
\quad  \frac{\dif \tD(Y, \bkappa)}{\dif \kappa_j} & = N^T_jG_j \tD(Y,\bkappa) \frac{I_{(d-j+1)/2}}{I_{(d-j-1)/2}}(\kappa_j N^T_jG_j )  \quad \text{and}  %\nonumber\\
\quad  \frac{\dif \tD(\bkappa)}{\dif \kappa_j}     = \tD(\bkappa) \frac{I_{(d-j+1)/2}}{I_{(d-j-1)/2}}(\kappa_j). \nonumber
\intertext{Then, writing $L = \log\left\{P(\{X_i,\cY_i\})\right\}$, we have}
\frac{\dif L }{\dif \kappa_k} &= G_{[,k]}^T S_{[,k]} +\sum_{i=1}^n \sum_{j=1}^{|\cY_i|}\left\{\frac{\tD'(\bkappa) - \tD'(Y_{ij}, \bkappa) }{\tD(\bkappa) - \tD(Y_{ij}, \bkappa)} -
                             \frac{\tD'(Y_{ij}, \bkappa) }{\tD(Y_{ij}, \bkappa) }  \right\} - N\frac{\tD'(\bkappa)}{\tD(\bkappa)}  \nonumber\\
%   &=  G_{[,k]}^T S_{[,k]} +\sum_{i=1}^n \sum_{j=1}^{|\cY_i|}\left(\frac{\tD'(\bkappa)\tD(Y_{ij},\bkappa) - \tD(\bkappa)
%                            \tD'(Y_{i,j}, \bkappa) }{\tD(Y_{i,j}, \bkappa) (\tD(\bkappa) - \tD(Y_{i,j}, \bkappa))}  \right)
%              - N  \frac{I_{(d-j+1)/2}}{I_{(d-j-1)/2}}(\kappa_k)  \\
%   &=  G_{[,k]}^T S_{[,k]} +\sum_{i=1}^n \sum_{j=1}^{|\cY_i|}\left(\frac{\tD'(\bkappa) - \tD(\bkappa)N^T_kG_k \frac{I_{(d-k+1)/2}}{I_{(d-k-1)/2}}(\kappa_k N^T_kG_k ) }
%                            { (\tD(\bkappa) - \tD(Y_{i,j}, \bkappa))}  \right)  \nonumber \\
%       & \qquad       - N  \frac{I_{(d-j+1)/2}}{I_{(d-j-1)/2}}(\kappa_k)  \\
    &\hspace{-.2in}=  G_{[,k]}^T S_{[,k]} +\sum_{i=1}^n \sum_{j=1}^{|\cY_i|}\left\{(\frac{ \frac{I_{(d-k+1)/2}}{I_{(d-k-1)/2}}(\kappa_k)  - N^T_kG_k \frac{I_{(d-k+1)/2}}{I_{(d-k-1)/2}}(\kappa_k N^T_kG_k ) }
                             { 1 - \frac{\tD(Y_{ij}, \bkappa)}{\tD(\bkappa)}}  \right\}  
        - N  \frac{I_{(d-k+1)/2}}{I_{(d-k-1)/2}}(\kappa_k) \nonumber
\end{align}

 \label{sec:gradient}

\bibliographystyle{apalike}
\bibliography{refs}

\end{document}